\documentclass{article}
\usepackage{amssymb,amsmath,amsthm}
\usepackage[left=.75 in, right=.75 in,top=.75 in, bottom=.75 in]{geometry}
\usepackage{dirtytalk}
\usepackage{graphicx}
\usepackage{comment}
\usepackage{subcaption}
\usepackage{titling}
\usepackage{hyperref}
    \hypersetup{
        colorlinks=true,
        linkcolor=black,
        citecolor=black,
        urlcolor=black,
        filecolor=black,
        anchorcolor=black
    }

\usepackage{pdflscape}

\usepackage{booktabs}
\usepackage{mdframed}
\usepackage{blindtext}
\usepackage{titlesec}
\usepackage{lipsum} % For dummy text
\usepackage{transparent}
\usepackage{tocloft}
\usepackage{tikz}
\usepackage{tabularx}
\usepackage{algorithm}
\usepackage{algpseudocode}

\algnewcommand{\algorithmicparfor}{\textbf{par for}}

\algdef{SE}[PARFOR]{ParFor}{EndParFor}[1]
  {\algorithmicparfor\ #1}
  {}

\algtext*{EndParFor}
\usepackage{longtable}
\usetikzlibrary{calc}
\usepackage{diagbox}
\def\H1{{_0}H^1(\Omega)}

\def\a{\mathcal{A}}

\newcommand{\C}{\mathcal{C}}

\def\XXint#1#2#3{{\setbox0=\hbox{$#1{#2#3}{\int}$ }
\vcenter{\hbox{$#2#3$ }}\kern-.6\wd0}}

\newtheorem{lem}{Lemma}[section]
\newtheorem{cor}[lem]{Corollary}

\newtheorem{thm}[lem]{Theorem}

\newtheorem*{cor*}{Corollary}
\newtheorem*{prop*}{Proposition}
\newtheorem*{thm*}{Theorem}
\newtheorem*{definition*}{Definition}
\newtheorem*{lem*}{Lemma}

\begin{document}

\title{The Optimal Knight Exchange Puzzle is NP-Hard}
\author{Henry Siegel}
\date{}

\maketitle
\section*{Abstract}
This paper explores the hardness of two popular recreational chess puzzles: The Knight's Tour and the Knight Exchange (Swap). The problem of finding a knight's tour is known to be NP-hard for any chessboard with holes and constant-time decidable for rectangular chessboards, so a natural direction is to explore the hardness of the problem for intermediate chessboard restrictions. In this paper, we show that Knight's Tour is NP-hard for connected boards. We also give a short polynomial-time reduction between the two problems, showing that the optimality version of Knight Exchange is NP-hard.

\section*{Introduction}
Knight's Tour and Knight Exchange are popular recreational puzzles that have appeared in video games and online puzzle forums. 

\begin{itemize}
    \item The Knight's Tour puzzle: Given a chessboard, can a knight visit every square exactly once (A knight moves in an L shape as prescribed by FIDE chess rules). In the open variant, the start and end squares are different. In the closed variant, the knight returns to its starting square. 
    \item The Knight Exchange puzzle: Given a chessboard with pairs of black and white knights, can you swap the black and white knights with a sequence of moves so that every square contains at most one knight at any time? ~\ref{fig:eleventh_hour_puzzle}
\end{itemize}

\begin{figure}[htbp]
    \centering
    % Left Image
    \begin{subfigure}[b]{0.285\textwidth}
        \centering
        \includegraphics[width=\textwidth]{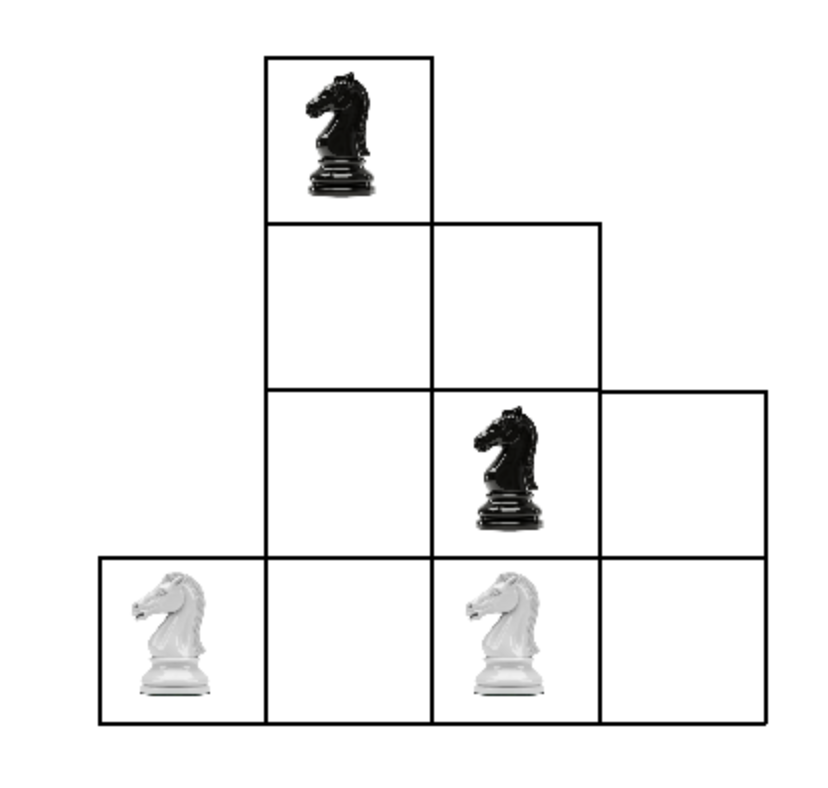}
        \caption{Starting position}
    \end{subfigure}
    % Right Image
    \begin{subfigure}[b]{0.285\textwidth}
        \centering
        \includegraphics[width=\textwidth]{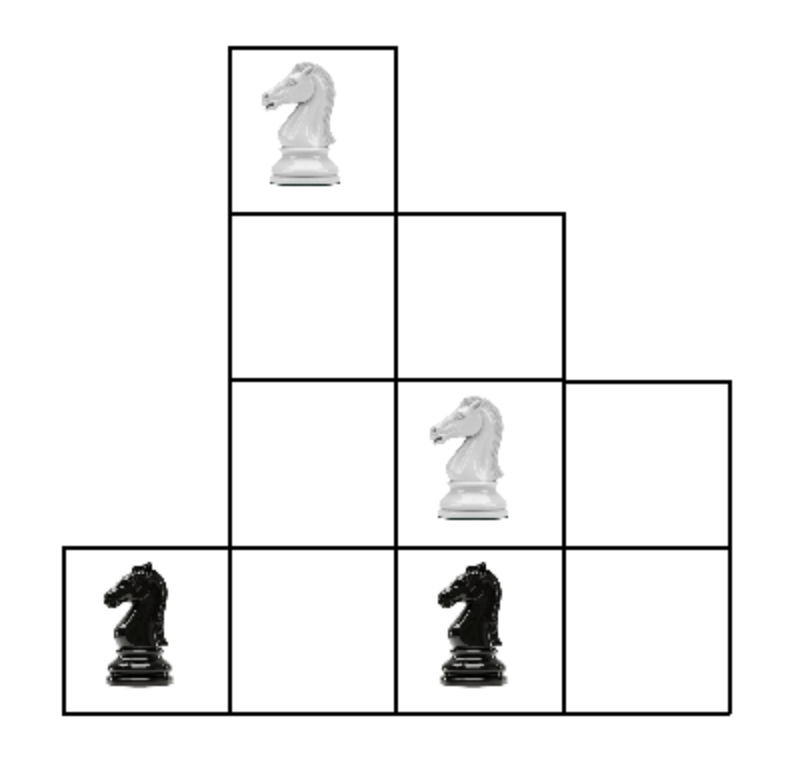}
        \caption{knights swapped places}
    \end{subfigure}
    
    \caption{This particular puzzle appeared in the video game \textit{The Eleventh Hour}}
    \label{fig:eleventh_hour_puzzle}
\end{figure}
\pagebreak

\subsection*{Definitions}
\begin{itemize}
    \item A chessboard with holes is a square boolean matrix where a 1 is a square and a 0 is the absence of a square.
    \item A connected chessboard is a chessboard with holes such that a rook (as prescribed by FIDE chess rules) can move between any two squares in finite steps.
    \item A knight's graph is a simple undirected bipartite graph where the vertices are the squares of a chessboard. Two vertices are connected if a knight can move between them in one step.
    \item An Open Knight's Tour is a Hamiltonian Path on the knight's graph.
    \item A Closed Knight's Tour is a Hamiltonian Cycle on the knight's graph.
\end{itemize}

\subsection*{Prior Work}
The hardness of deciding whether an Open Knight's Tour exists varies on the class of input chessboard. Schwenk \cite{Schwenk1991} gave a complete classification on the dimensions of all rectangular chessboards that contain an Open Knight's Tour. Conrad \cite{ConradHindrichsMorsyWegener1994} showed a linear-time solution to compute an Open Knight's Tour for rectangular chessboards. However, for chessboards with holes, McGown et. al \cite{McGownLeininger2002} showed that the problem is NP-hard. In this paper, we will study an intermediate case, confirming a conjecture posed by McGown that the Open Knight's Tour problem is NP-hard for connected chessboards with holes. We will also show that the closed variant is NP-hard for the same types of chessboard.

Goraly and Hassin \cite{GoralyHassin2010} formulated the Knight Exchange Puzzle as an instance of a multicolored pebble motion problem. Pebble motion problems are characterized by an undirected simple graph with $n$ vertices with $p \leq n$ pebbles, where each vertex can hold at most one pebble. A move consists of transfering a pebble from an occupied vertex to an adjacent unoccopied vertex. The feasibility version is whether two configurations of pebbles can be transformed through a series of moves, and the optimality version cares about the length of the solution. Versions of the problem have been studied like \cite{KornhauserMillerSpirakis1984} and \cite{calinescu2005}, where the pebbles are indistinguishable or distinguishable (by color) or with graph restrictions such as trees, planar graphs, or grid graphs. 

Viewed as a combinatorial optimization problem, Iordan \cite{Iordan2019} computed an optimal solution on a $4n \times 4n$ chessboard with $2n$ black and white knights placed on the first and last rows. Iranpoor \cite{Iranpoor2021} formulated the puzzle as a Network Flow problem when the board is fixed to be a $3 \times 3$ square with the black and white knights on the first and last rows. 

From a hardness perspective, Goldreich \cite{Goldreich} proved that the $N \times N$ puzzle is NP-hard, which is an instance of the fully labeled pebble motion problem (Every pebble is a different color). Recently, Geft showed that the optimality version of multicolored pebble motion problem for two colors is NP-hard when restricted to trees. \cite{Geft2026}. The feasibility version of the Knight Exchange can be decided in poly-time as it follows from the work done by Goraly and Hassin \cite{GoralyHassin2010}. However, to the best of my knowledge, no prior work studied the hardness of the optimality version of the Knight Exchange with general boards and placement of knights. This question fits nicely in the general theory of pebble motion problems because the Knight Exchange is an instance of the two-color pebble motion problem, where the color classes are the same size, the initial and final configurations are reflected, and the graph inputs are knight's graphs. We show that the optimality version of the Knight Exchange is NP-hard.

We will also show a simple gadget-style polynomial reduction from the Knight's Tour to the Knight Exchange. The reduction can be presented outside of a chess context as it is a reduction from the Hamiltonian Cycle problem on bipartite graphs to the optimality version of the two-color pebble motion problem where the goal is to swap the colors. For brevity, we will refer to this problem as Pebble Swap. In the next section, we will present the reduction in the context of Hamiltonian Cycles and Pebble Swap because it can be adapted to their chess counterparts. The reduction can be viewed as an alternative route to showing that certain families of pebble motion problems are NP-hard. The standard routes in the literature are described in \cite{KornhauserMillerSpirakis1984} \cite{Goldreich}.

\subsection*{Results in This Paper}
\begin{itemize}
    \item The problem of determining whether a Knight's Tour exists on connected chessboards is NP-hard for both the closed and open versions.
    \item The Optimal Knight Exchange problem on connected chessboards is NP-hard.
\end{itemize}
For brevity, we will refer to the problems as Open/Closed Knight's Tour and Knight Exchange.

\section*{Pebble Swap on Bipartite Graphs}

We start with a formal definition of the problem.

    \begin{itemize}
        \item Given a graph $G = (V,E)$, a configuration is a function $\C: V \rightarrow \{-1,0,1\}$ that can be thought of as the pebble assignment function where vertices mapped to $-1$ are occupied by a black pebble, vertices mapped to $1$ are occupied by a white pebble, and vertices mapped to $0$ are empty.
        
        \item A move is the ordered pair of configurations $(\mathcal{C}_1, \mathcal{C}_2)$ such that there exists $uv \in E$ such that $\C_1 \equiv \C_2$ on $V \setminus \{u,v\}$, $\C_1(v) = \C_2(u) = 0$, and $\C_1(u) = \C_2(v) \neq 0$. For brevity, we will denote a move as $u \rightarrow v$, where the pebble on $u$ moves to $v$.

        \item A plan is a sequence of moves of the form $\langle(\C_0,\C_1), (\C_1,\C_2), \cdots, (\C_{k-2}, \C_{k-1}),(\C_{k-1}, \C_k) \rangle$. We say that the plan transforms $\C_0$ to $\C_k$.
    \end{itemize}
    Pebble Swap Problem: Given a graph $G$, a configuration $\C$ and a natural number $k$, is there a plan of length $k$ that transforms $\C$ into $\C'$ so that $\C \equiv -\C'$ (the white and black pebbles have swapped places).

The Hamiltonian Cycle problem is a classic NP-hard problem. Demaine and Rudoy \cite{DemaineRudoy2017} showed that the problem restricted to bipartite graph is still NP-Hard. We will provide a polynomial time reduction from this problem to the Pebble Swap Problem.
\noindent

\begin{thm}
    The Pebble Swap Problem on Bipartite Graphs is NP-Hard.
\end{thm}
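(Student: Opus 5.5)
We plan a polynomial-time reduction from Hamiltonian Cycle on bipartite graphs, which is NP-hard by Demaine and Rudoy \cite{DemaineRudoy2017}. The instance will have exactly one empty vertex at every moment, so every plan is a walk of the empty vertex, the \emph{hole}. A plan of minimum length will then force the hole to trace a Hamiltonian path of a slightly modified graph.

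\textbf{Construction.} Let $G=(V,E)$ be bipartite with sides $A,B$ and $|V|=n$, and fix a vertex $v\in A$. Build $G'$ from $G$ in two steps.
First add a twin $v'$ of $v$, adjacent to exactly $N(v)$; put $v'$ in $A$.
Then add a vertex $h$ adjacent only to $v$ and $v'$; put $h$ in $B$.
The graph $G'$ is bipartite, and it is standard that $G$ has a Hamiltonian cycle if and only if $G'-h$ has a Hamiltonian path from $v$ to $v'$. Define the configuration $\C$ by $\C(h)=0$, $\C(x)=-1$ for $x\in A\cup\{v'\}$, and $\C(x)=1$ for $x\in B$. Set $k=n+2$.

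\textbf{Key steps.}
\begin{enumerate}
\item Every move sends a pebble into the unique empty vertex. Hence there is always exactly one empty vertex, and a plan corresponds to a walk $h=y_0,y_1,\dots,y_m$ of the hole, where move $t$ is $y_t\to y_{t-1}$.
\item Lower bound. Each of the $n+1$ occupied vertices must change color, so it must be vacated at least once. Also $h$ is empty both at the start and at the end, so the walk is closed, and its last step vacates $h$. Hence $m\ge n+2$.
\item Equality case. If $m=n+2$, then the closed walk $h,y_1,\dots,y_{n+1},h$ visits each vertex of $G'-h$ exactly once. Since $N(h)=\{v,v'\}$, both $y_1$ and $y_{n+1}$ lie in $\{v,v'\}$, and they are distinct. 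So $y_1,\dots,y_{n+1}$ is a Hamiltonian path of $G'-h$ between $v$ and $v'$, which gives a Hamiltonian cycle in $G$.
\item Converse. Given a Hamiltonian cycle of $G$, take the corresponding path $v=y_1,\dots,y_{n+1}=v'$ and let the hole walk $h,y_1,\dots,y_{n+1},h$.
\begin{itemize}
\item Each vertex $y_i$ with $i\ge 2$ ends up holding the pebble that started at $y_{i-1}$, which is adjacent to $y_i$ and therefore has the opposite color.
\item The vertex $y_1=v$ ends up holding the pebble that started at $v'$. That pebble is black, while $v$ started black, so this must be checked.
\end{itemize}
\end{enumerate}

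\textbf{Main obstacle.} The pebble that travels $v'\to h\to v$ returns black to $v$, so the color swap fails at exactly one vertex. I expect this twin/endpoint bookkeeping to be the hardest part. The first fix I would try is to make the ends of the path opposite-colored: attach $h$ to $v$ and to a new vertex $w$ that is a twin of a neighbor $u\in N(v)$, and reduce from Hamiltonian paths with prescribed endpoints. Then the pebble leaving through $h$ has the color $v$ needs. Equivalently, one can start from an edge $uv$ of $G$ and let the hole enter at $v$ and exit at $u$.

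Once the attachment of $h$ is chosen so that parities match, steps 1–3 go through unchanged. The lower-bound and equality arguments do not depend on which two vertices $h$ is adjacent to; they only require $\deg h=2$ and that every other vertex must change color.
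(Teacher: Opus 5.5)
\textbf{There is a genuine gap.} It is not bookkeeping at one vertex. It is a parity obstruction that a single hole cannot overcome in a bipartite graph.

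Your base instance has $|A|+1$ black and $|B|$ white pebbles. Whenever the bipartite graph $G$ has a Hamiltonian cycle, $|A|=|B|$. Moves preserve colors, so $-\mathcal{C}$ is then unreachable by a plan of any length, and every yes-instance is mapped to a no-instance. (Separately, with move $t$ being $y_t\to y_{t-1}$, it is $y_{i-1}$ that receives the pebble from $y_i$, and $v'$ that ends up with $v$'s pebble; this is cosmetic.)

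Both of your proposed fixes leave the bipartite class:
\begin{itemize}
\item A twin $w$ of $u\in N(v)$ is adjacent to $v$, because $v\in N(u)$. So $h,v,w$ is a triangle.
\item Attaching $h$ to both ends of an edge $uv$ gives the triangle $h,u,v$.
\end{itemize}
The statement concerns bipartite instances, which is also what the knight's-graph application needs. Choosing the attachment more cleverly cannot fix this. Suppose a bipartite $G'$ has $N$ vertices and one empty vertex. A feasible swap needs the $N-1$ pebbles split evenly, so $N$ is odd. But the hole's walk is closed, so it has even length. Hence the bound $m\ge N$ is never attained. The equality case of your step 3, which is what yields the Hamiltonian structure, therefore never occurs, and your claim that steps 1--3 ``go through unchanged'' is exactly what fails.

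The missing idea is a second hole. The paper attaches two pendant vertices $x,y$ to the same vertex $v_0$. They lie on the same side, so $G'$ stays bipartite. Both are left empty, the $n$ pebbles of $G$ stay balanced, and the budget is $k=n+4$.

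\emph{Forward direction.} Given a Hamiltonian cycle $v_0v_1\cdots v_{n-1}v_0$:
\begin{itemize}
\item The hole from $x$ enters at $v_0$ and runs along $v_1,\dots,v_{n-1}$.
\item The hole from $y$ lets the pebble now sitting at $v_0$ be parked on $y$.
\item The original $v_0$ pebble, waiting on $x$, then moves through $v_0$ to $v_{n-1}$.
\item The parked pebble returns to $v_0$.
\end{itemize}
The net effect is a one-step rotation of the pebbles around an even cycle, which flips every color.

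\emph{Converse.} Each hole trajectory starts and ends in $\{x,y\}$ and passes through $v_0$. Each is nontrivial, and together they visit all of $V$. The budget $n+4$ then forces one trajectory to be just the detour through $v_0$. The other must visit $v_1,\dots,v_{n-1}$ exactly once and return to $v_0$, which is a Hamiltonian cycle. The second hole supplies exactly the parity slack that a single hole cannot.
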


\begin{proof}
Suppose the input graph is $G = (V,E)$. 

We will construct the modified graph $G' = (V',E')$ as follows: 
Set the new vertex set $V'$ to include all the original vertices with two extra vertices: $V' = V \cup \{x, y\}$. Fix an arbitrary $v_0 \in V^+$ and add edges $v_0 x$ and $v_0 y$. The new edge set $E'$ contains all the original edges in $E$ in addition to these two extra edges:

Let $V^-, V^+$ be the two bipartite sets of $V$. Place black pebbles on all the vertices in $V^-$ and white pebbles on all the vertices in $V^+$.
\[\C(v) = \begin{cases}1 \text{ if } v \in V^+ \\ -1 \text{ if } v \in V^- \\
0 \text{ if } v \in \{x,y\} \end{cases}\]

The reduction is $G \rightarrow (G', \C,n+4)$. There are $O(n^2)$ edges to construct, and $O(n)$ vertices to construct in $G'$, so the reduction is polynomial time.

\subsubsection*{Hamiltonian Cycle $\implies$ A plan of n+4 moves to swap the pebbles.}

Suppose $v_0 v_1 v_2 \cdots v_{n-1} v_0$ is a Hamiltonian cycle in the input graph. Recall that $v_0$ is the \say{special} vertex with edges $v_0x$ and $v_0y$. Table \ref{Hamiltonian_cycle_moves} shows a plan of $n+4$ moves to swap the black and white pebbles on $G'$. For clarity, Figure \ref{Example_on_C4} illustrates the plan if $G$ is a four-cycle.

\subsubsection*{A plan of n+4 moves to swap pebbles $\implies$ Hamiltonian Cycle}
Suppose a plan swapped the pebbles and used $N$ total moves. We will denote a hole to mean a vertex without a pebble. At any point in the plan, there are two holes, and they originated from $x$ and $y$. We partition the moves into whether the hole that shifted at that move originated from $x$ or $y$. Suppose that $j$ and $k$ are the sizes of each partition set ($j + k = N$). We define the hole trajectories as:

\noindent
$\text{traj}(x) := x_0 x_1, \cdots x_j$ \\
$\text{traj}(y) := y_0 y_1, \cdots y_k$, where $x_i, y_i \in V$

\begin{lem} (Three parts):
    \begin{enumerate}
        \item $x_0, x_j, y_0, y_k \in \{x,y\}$.
        \item $j,k > 0$.
        \item The union of trajectories $x$ and $y$ visits every vertex in $V$: $v_0, v_1, \cdots, v_{n-1}$.
    \end{enumerate}
\end{lem} \label{hole_trajectories}

    We first prove the lemma. 
        \begin{enumerate}
        \item Every $v \in V$ starts with a pebble, so the holes must start and return to $x$ and $y$.
        \item Without loss of generality, suppose a plan didn't use $y$. Then the pebble at $v_0$ can only shift between $x$ and $v_0$. Both of these vertices aren't in $V^-$, which is where it needs to be after the plan.
        \item For each $v \in V$, a different pebble starts and ends there. Right before the final vertex moves to $v$, there must have been a hole there.
        \end{enumerate}
Now, by the first two parts of lemma, trajectories $x$ and $y$ take the form: $\text{traj}(x) = xv_0 P_x u$ and $\text{traj}(y) = yv_0 P_y v$ where $u,v \in \{x,y\}$, and $P_x$ and $P_y$ are paths of even length that end with $v_0$ (they could be empty). If there is a plan of length $n+4$ that swapped the pebbles, then we will show that $P_x$ or $P_y$ is empty. Suppose that this is not the case. Then the trajectories take the form: $\text{traj}(x) = xv_0 P'_x v_0 u$ and $\text{traj}(y) = yv_0 P'_y v_0 v$ where $P'_x v_0 = P_x$ and $P'_y v_0 = P_y$. By the third part of lemma ~\ref{hole_trajectories}, $P'_x \cup P'_y = \{v_1, \cdots v_{n-1}\}$, so $|P'_x| + |P'_y| \geq n-1$. The length of the plan is $j + k = |\text{traj}(x)| + |\text{traj}(x)| - 2$.
Note that $|\text{traj}(x)| + |\text{traj}(y)| = 8 + |P'_x| + |P'_y|$ which includes the four total visits of $v_0$, and one visit each of $x$, $y$, $u$ and $v$. Thus, $|P'_x| + |P'_y| \geq n-1$ implies that the sum of the lengths of the trajectories is greater than $n+4$. This is a contradiction, so $P_x$ or $P_y$ is empty. 

We can assume without loss of generality that $P_y$ is empty, and $P_x$ visits $v_1, \cdots, v_{n-1}$. Each vertex appears exactly once in $P_x$ because if a vertex occured more than once, $P_x$ would be longer than $n-1$, contradicting the fact that the plan contains $n+4$ moves. Therefore, each vertex occurs exactly once and $v_{n-1}$ is connected to $v_0$, so we have a Hamiltonian cycle. There is a small technicality when $n=2$, but we can assume that all input graphs have at least four vertices.
\end{proof}

\begin{table}[h!]
\centering

\begin{tabular}{c|l|l}
\textbf{move \#} & \textbf{move} & \textbf{Configuration}
\\ \hline

0 & Initial 
& $\langle 0, 0, 1, -1, 1, \cdots, -1 \rangle$ 
\\

1 & ($v_0 \rightarrow x$) 
& $\langle 1, 0, 0, -1, 1, \cdots, -1 \rangle$ 
\\

2 ... n &  
$(v_1\!\to\! v_0)\ (v_2\!\to\! v_1),\ \dots,\ (v_{n-1}\!\to\! v_{n-2})$
& $\langle 1, 0, -1, 1, -1, \cdots, 0 \rangle$ 
\\

n+1 & ($v_0 \rightarrow y$)
& $\langle 1, -1, 0, 1, -1, \cdots, 0 \rangle$ 
\\

n+2 & ($x \rightarrow v_0$)
& $\langle 0, -1, 1, 1, -1, \cdots, 0 \rangle$ 
\\

n+3 & ($v_0 \rightarrow v_{n-1}$)
& $\langle 0, -1, 0, 1, -1, \dots, 1 \rangle$ 
\\

n+4 & ($y \rightarrow v_0$)
& $\langle 0, 0, -1, 1, -1, \cdots, 1 \rangle$ 
\\

\end{tabular}
\caption{The configurations are $\langle \C(x), \C(y), \C(v_0), \cdots, \C(v_{n-1}) \rangle$. The white and black pebbles have swapped because the starting and ending configurations are opposite.}
\label{Hamiltonian_cycle_moves}

\end{table}

\subsection*{Remark:}
There are certainly input graphs that do not have Hamiltonian Cycles, and it is possible to swap the black and white pebbles on the transformed graph using more than $n+4$ moves. Figure \ref{finite_plan_counterexample} shows an example.

\section*{Stepping into the Chess World}
We will adapt this reduction to Knight's Tour and Knight Exchange. Any knight's graph is bipartite, but the reduction does not immediately carry over because the geometry of chessboards restricts which vertex we can connect $x$ and $y$ to without connecting them to other vertices. For example, figure \ref{fig:not_extendable_example} shows a board where the $x,y$ extension is not possible. However, we will show that the closed knight's tour problem is NP-hard for a class of chessboards where the $x,y$ extension is possible. The details are shown in figure ~\ref{fig:reduction_is_extendable}. 

We now extend the result of McGown et. al \cite{McGownLeininger2002} by modifying their reduction. The Hamiltonian path and cycle problems when restricted to grid graphs are known to be NP-hard as shown by Itai et. al \cite{ItaiPapadimitriouSzwarcfiter1982} and Demaine and Rudoy \cite{DemaineRudoy2017}. We will reduce to their chess counterparts. In fact, we will employ the same reduction to show the hardness of both versions:

\begin{thm}
    The Open and Closed Knight's Tour Problems on connected chessboards is NP-hard.
    \label{thm:hardness_of_open_and_closed_knight_tour}
\end{thm}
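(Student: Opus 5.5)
We reduce from Hamiltonian cycle (for the closed variant) and Hamiltonian path (for the open variant) on grid graphs, which are NP-hard by Itai et al.\ \cite{ItaiPapadimitriouSzwarcfiter1982} and Demaine and Rudoy \cite{DemaineRudoy2017}. The plan follows the reduction of McGown et al.\ \cite{McGownLeininger2002}: we embed a grid graph $H$ into a chessboard so that Hamiltonian cycles (paths) of $H$ correspond exactly to closed (open) knight's tours. Each vertex of $H$ becomes a constant-size \emph{vertex gadget} of squares, placed on a scaled grid with spacing $c$. Each edge of $H$ becomes a \emph{corridor gadget} joining two adjacent vertex gadgets. The corridor is built so that a knight's tour either traverses it as a ``used'' edge, entering and leaving at opposite ends, or covers its squares by a local detour that returns to the same end (the ``unused'' state).

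We then argue correctness in both directions. Given a Hamiltonian cycle of $H$, we fix, for each vertex gadget, the local routing determined by which two incident edges the cycle uses. Every unused corridor is swept by its local detour, and the local pieces are concatenated into a closed tour. Conversely, given a closed tour, we show that each vertex gadget must be entered and exited through exactly two of its corridors in ``used'' mode. This follows from a local counting argument on low-degree squares: corner squares of a gadget have knight-degree $2$, which forces both of their incident edges into the tour. Contracting gadgets therefore yields a $2$-regular spanning connected subgraph of $H$, i.e.\ a Hamiltonian cycle. The open variant reuses the same board: we attach at a boundary vertex $s$ of $H$ a short pendant chain of squares whose end square has knight-degree $1$. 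This forces any open tour to start there, mirroring the standard reduction from Hamiltonian path with a specified endpoint. Since Hamiltonian path with specified endpoints on grid graphs is also NP-hard, this gives the open case.

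The new ingredient over McGown et al.\ is \emph{rook-connectivity}. Their boards have holes that disconnect the board under rook moves. We will fill the space between gadgets with \emph{filler} squares, chosen so that the filler is rook-connected and touches every gadget orthogonally. The filler must still not create knight edges that break the gadget analysis. Concretely, we will make the filler a union of thin strips, one or two squares wide, running along the scaled grid lines. We then choose the gadget layout modulo $c$ (taking $c$ large, say $c \ge 8$) so that each filler strip is itself covered by a forced knight path that is absorbed into an adjacent corridor's detour. This absorption should happen identically in the used and unused states, so that the correspondence argument above is unchanged.

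The main obstacle is this last step: adding rook-connecting squares inevitably adds knight edges, and these extra edges could create unintended tours. We will first try to resolve it by a parity and degree argument. Filler squares are placed so that each has knight-degree exactly $2$ except at designated junctions, which forces the tour's route through them. We will then verify, by a finite case check on the constant-size local patterns (gadget, corridor, and filler interface), that no knight edge connects two different gadgets except through the intended corridors. Finally, we check that the board has size polynomial in $|V(H)|$ and is built in polynomial time, which is immediate from the constant-size gadgets on an $O(|V(H)|)$-scaled grid.
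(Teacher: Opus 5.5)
\textbf{Verdict: there is a genuine gap.} Your proposal lists the properties the vertex, corridor and filler gadgets would need, but never exhibits them. The one step that is new relative to McGown et al.---making the board rook-connected without creating unwanted knight edges---is explicitly postponed (``we will first try'', ``we will then verify by a finite case check''). So as written, nothing is proved.

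The filler you sketch is also doubtful. A strip of width $1$ has no internal knight edges. Each filler square then needs two knight neighbours inside adjacent gadgets or corridors, which adds edges to exactly the structures whose analysis you want to leave untouched. In a strip of width $2$ the only knight moves are $(r,c)\to(1-r,c\pm 2)$, so its knight graph is four zig-zag paths $(0,c),(1,c+2),(0,c+4),\dots$ running the whole length of the strip. If these squares have degree $2$, as you require, each path is a forced wire between the two ends of the strip. A strip running along a grid line from one gadget towards the next therefore acts as an extra, always-used connection between two different gadgets. The only escape is to attach both ends of every wire to a single corridor in a way valid in both its used and unused states, and that is precisely the claim left unproved.

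Your converse has a second hole. Degree-$2$ forcing at corner squares is local and does not stop a tour from visiting a gadget twice, entering and leaving through four corridors. So contracting the gadgets need not give a $2$-regular subgraph.

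\textbf{How the paper avoids this.} It uses no filler and no corridors. Each vertex $v$ becomes a $9\times 9$ board $M_v$ determined by its edge pattern (boards for the class $B$ are rotated and transposed), and the boards sit on a lattice of spacing $9$. For each edge of the grid graph, the two boards have protruding arm squares that are orthogonally adjacent across the common border, which gives the rook connection. These arms also contain the unique knight-adjacent pair of transition squares. For a missing edge, the arm and one adjacent square are deleted, removing both connections while preserving the colour imbalance. Thus $M$ is rook-connected exactly when the grid graph is connected, and the paper restricts to connected inputs.

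For the converse the paper uses a colour count instead of degree forcing:
\begin{itemize}
\item all transition squares of $M_v$ have the same colour;
\item that colour outnumbers the other in $M_v$ by exactly one;
\item every maximal stretch of the tour inside $M_v$ begins and ends on transition squares, so it contains one more square of that colour.
\end{itemize}
Hence a closed tour enters each $M_v$ exactly once, and the order of these blocks is a Hamiltonian cycle. The forward direction is a finite check that on each board type every pair of transition squares is joined by an open knight's tour of that board.

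The ideas you are missing are therefore: let the edge connectors themselves supply rook-connectivity, and control repeated visits with a colour-imbalance invariant.
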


\begin{proof}

Given an input grid graph, let $v \rightarrow M_v$ be a mapping from vertices to chessboards. Since the grid graph is bipartite, there is a natural partition $V = (A,B)$ based on the parity of the coordinates of vertices. 

\begin{itemize}
    \item For $v \in A$, let $b(v) = b_1 b_2 b_3 b_4$ be a binary string of indicator variables in the up, right, down, and left directions respectively, where $b_i$ is 1 if there is an edge in that direction or whether the vertex is located on that edge of the grid. Define $M_v$ to be the chessboard that $b(v)$ is mapped to as shown in tables ~\ref{tab:bit_strings_to_boards1} and ~\ref{tab:bit_strings_to_boards2}.
    \item For $v \in B$, permute $b(v) = b_1 b_2 b_3 b_4$ to be $b_2 b_1 b_4 b_3$. Let $Q$ be the chessboard that $b_2 b_1 b_4 b_3$ is mapped to as shown in tables ~\ref{tab:bit_strings_to_boards1} and ~\ref{tab:bit_strings_to_boards2}. Rotate $Q$ by $\pi$ and then apply the transpose. Define $M_v$ to be the resulting board.
\end{itemize}

For all $v \in V$, we will place $M_v$ in a larger conglomerate chessboard $M$. If $x$ connects up to $y$ in the grid graph, $M_y$ is situated 9 squares above $M_x$ in $M$. Perform the similar procedure for horizontal edges. By construction, $x$ and $y$ share an edge if and only if a rook can move between $M_x$ and $M_y$ without traveling through other boards. Therefore, $M$ is a connected chessboard if and only if the grid graph is connected, and we can restrict the inputs to be connected grid graphs. The number of squares in $M$ is linear with the number of vertices on the grid graph, and checking for the position of a vertex and determining its edges can be done in poly-time. Therefore, the reduction is poly-time.
For a concrete example of a how the reduction works, see figure ~\ref{reduction_example}.

The analysis for both versions will be similar to Mcgown's in their paper \cite{McGownLeininger2002}. We will omit the open version because it is nearly identical and just needs to be adapted for this new construction.

\subsubsection*{Hamiltonian Cycle in grid graph $\implies $ Closed knight's Tour.}

Suppose that there is a Hamiltonian cycle in the grid graph. We want to show that there is a closed knight's tour in $M$. The idea is to combine open knight's tours on $M_v$ to form a closed knight's tour on $M$. For any edge $xy$, there is a unique pair of squares in $M_x$ and $M_y$ that are adjacent in the knight's graph. We call them the transition squares of $M_x$ and $M_y$. Now, for any $v \in V$ with $t$ edges, there are ${t \choose 2}$ ways that the vertices in the cycle adjacent to $x$ can be positioned. Each way corresponds to a unique selection of two transition squares. Figures ~\ref{4_connected_cases}, ~\ref{3_connected_cases}, ~\ref{2_connected_cases}, and ~\ref{1-connected_cases} show that for any type of board, every selection of two transition squares has an open knight's tour that starts and ends there. Thus, we can combine knight's tours on each $M_v$ in the order that they appear in the cycle to build the closed knight's tour on $M$.

\subsubsection*{Closed Knight's Tour $\implies $ Hamiltonian Cycle in grid graph}

Suppose that $T$ is a closed knight's tour in $M$. Let $g: T \rightarrow V$ be defined by $g(s) = v$ when $s$ belongs to $M_v$. Consider the sequence $g(T)$ (apply $g$ to every element in $T$ pointwise). Every $v \in V$ appears in $g(T)$ since $T$ visits every square in $M$. 

We will now argue that vertices show up in $g(T)$ in blocks. For every $M_v$, the difference in light and dark squares is 1. All transition squares of $M_v$ are the same color, and a knight alternates between light and dark squares after each move. Therefore, the knight can only enter and leave $M_v$ once. Otherwise, the number of light and dark squares in $M_v$ that appear in $T$ would not differ by 1. We enumerate all the blocks to build a Hamiltonian Cycle in the grid graph.
\end{proof}

\begin{thm}
     The Knight Exchange Problem on connected chessboards is NP-hard
     \label{thm:hardness_of_knight_exchange}
\end{thm}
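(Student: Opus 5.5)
We will compose the reduction in Theorem~\ref{thm:hardness_of_open_and_closed_knight_tour} with the Pebble Swap gadget from the bipartite-graph theorem. We start from a connected grid graph $H$ and build the connected chessboard $M$ as in the proof of Theorem~\ref{thm:hardness_of_open_and_closed_knight_tour}, so that $M$ has a closed knight's tour if and only if $H$ has a Hamiltonian cycle. We then realize the $x,y$ extension inside the chess world. We choose one board $M_{v}$ for a vertex on the boundary of the grid (say an extreme corner vertex, where the block has free space on the outside). Next to it we append two new squares $x$ and $y$, placed so that each is a knight's move from one fixed square $s_0$ of $M$ and from no other square of $M$, and so that $x$ and $y$ are not a knight's move from each other. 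As in figure~\ref{fig:reduction_is_extendable}, we also make sure that $x$ and $y$ remain rook-connected to $M$, adding filler squares if needed. If filler squares are added, they must be handled; the cleanest plan is to avoid fillers entirely by choosing $x,y$ adjacent (rook-wise) to $M_v$ squares but knight-adjacent only to $s_0$.

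We place white knights on one color class of $M$, black knights on the other, leave $x,y$ empty, and set $k=n+4$, where $n$ is the number of squares of $M$. The resulting knight's graph is exactly the graph $G'$ from the Pebble Swap theorem with $G$ the knight's graph of $M$ and $v_0=s_0$. The color-class requirement $v_0\in V^+$ is just a choice of which color is called white. Hence the equivalence follows verbatim from that proof. A closed knight's tour through $s_0$ gives the plan of Table~\ref{Hamiltonian_cycle_moves}. Conversely, the hole-trajectory argument of the three-part lemma shows that an $(n+4)$-move swap forces one hole to traverse a Hamiltonian cycle of $G$ through $v_0$. Note that that argument used only that $x,y$ are pendant vertices attached to $v_0$, not any other property of $G$. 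The construction is clearly polynomial, since it adds $O(1)$ squares to a board of polynomial size.

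The main obstacle is the geometric step: exhibiting the position of $s_0,x,y$ so that $x$ and $y$ are knight-adjacent only to $s_0$ while the board stays rook-connected. We would handle this by fixing the local shape of one specific boundary gadget from tables~\ref{tab:bit_strings_to_boards1}--\ref{tab:bit_strings_to_boards2}. We can always arrange, by choosing which grid vertex to use (e.g., the topmost-leftmost vertex, whose gadget has bits indicating a boundary on top and left), that the gadget has a convex corner square $s_0$. We would then check by hand the few candidate squares at knight distance from $s_0$ outside the board and find two that attack nothing else in $M$. The other blocks are at least 9 squares away, so only the chosen gadget needs to be checked. One subtlety we must verify is that $x$ and $y$ sharing a rook line with the board does not accidentally create knight edges to other squares. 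If no single corner works, we would instead modify that one gadget locally and re-verify its transition-square tours from the figures, as the paper does in figure~\ref{fig:reduction_is_extendable}.
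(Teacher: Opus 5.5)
Your reduction chain matches the paper's: build $M$ from a connected grid graph, hang two pendant squares on one square $s_0$ of a boundary block, and apply the Pebble Swap theorem verbatim with $k=n+4$. That part is fine. The gap is in the geometric step you flag yourself, specifically rook-connectivity.

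Your preferred filler-free placement is never exhibited, and at the natural places it fails. An empty square orthogonally adjacent to the thick body of a block attacks several squares of $M$. At an arm tip, the squares that attack only the tip do not touch $M$. For example, take a right-edge block of the first parity class with arm tip $X=(7,8)$ (row, column as in the tables). The outside squares attacking $X$ and nothing else in $M$ are $(6,10)$, $(8,10)$ and possibly $(9,9)$, and none of them is orthogonally adjacent to $M$. So fillers are needed, and your sentence that filler squares \emph{must be handled} is exactly where the proof stops.

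Your claim that the other blocks are at least $9$ squares away is also wrong. The blocks are $9\times 9$ and placed at offset $9$, so they abut. For instance, $(8,10)$ is a knight's move from $(9,8)$, which lies in the block below, so neighbouring blocks must be checked too.

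The missing observation, which is what the paper uses, is that a filler is harmless exactly when it has no knight neighbours at all. Leave it empty. It is then an isolated vertex of the knight graph that no knight can ever enter, so the new board's knight graph is $G'$ plus isolated vertices, and both directions of the Pebble Swap argument go through unchanged. By contrast, a filler attacked by some square would act as an extra hole if left empty, or carry an extra knight that must itself be swapped, and the hole-trajectory count no longer applies.

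Concretely, with $X$ as above you can take pendants $C=(6,10)$ and $D=(8,10)$, which attack only $X$, and fillers $A=(7,9)$ and $B=(7,10)$, which attack nothing. The chain $X,A,B$ plus $C,D$ next to $B$ restores rook-connectivity. This is the four-square extension of Figure~\ref{fig:reduction_is_extendable}. No gadget is modified, so, contrary to your fallback, no transition-square tours need re-verification.
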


\begin{proof}
We will adapt the Pebble Swap reduction. Let us use the terminology that a connected chessboard is nicely extendable if we can make the $x,y$ extension as described in the pebble swap reduction while keeping the resulting chessboard still connected. Note that additional squares may also be added as long as they are isolated vertices in the knight's graph, since we can keep them as holes, and no pebbles can move there. All the chessboards in the reduction are nicely extendable due to the shape of $M_v$ for the vertices $v$ located on the right edge of the grid graph as shown in ~\ref{fig:reduction_is_extendable}. Since the Closed Knight's Tour problem on nicely extendable connected chessboards is NP-hard, we can apply the Pebble Swap reduction to show that the Knight Exchange problem is NP-hard on connected chessboards that are nicely extendable. The set of all connected chessboards is a superset of those that are nicely extendable, so Knight Exchange is NP hard.

\end{proof}

\begin{cor}
    Both Knight's Tour and Knight Exchange are NP-complete because they are each NP.
\end{cor}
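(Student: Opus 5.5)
Since the hardness directions are already given by Theorems~\ref{thm:hardness_of_open_and_closed_knight_tour} and~\ref{thm:hardness_of_knight_exchange}, the corollary only needs membership in NP for each decision problem. I would give, for each, a certificate of polynomial size together with a polynomial-time verifier, measuring input size by the number of squares $s$ of the board (the number of $1$ entries of the boolean matrix, together with its dimensions).

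For Open and Closed Knight's Tour the certificate is an ordering $s_1,\dots,s_m$ of the squares. The verifier checks three things: every square appears exactly once, consecutive squares differ by a knight move (a $(\pm1,\pm2)$ or $(\pm2,\pm1)$ offset), and, in the closed variant, $s_m$ and $s_1$ are also a knight move apart. This takes $O(m\log m)$ time, so both variants lie in NP.

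For Knight Exchange (the optimality version: is there a plan of length $k$, or of length at most $k$, that swaps the colors) the certificate is the plan itself, given as a list of moves $u\to v$. The verifier simulates the plan from $\C$, checking at each step that $u$ is occupied, $v$ is empty, and $uv$ is a knight move, and at the end that the final configuration equals $-\C$. The issue here is the certificate's size: $k$ is written in binary, so a plan of length $k$ need not be polynomial in the input. This is the main obstacle, and I would resolve it in one of two ways:

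\begin{itemize}
\item \textbf{A polynomial bound on optimal plans.} Cite the results of Goraly and Hassin~\cite{GoralyHassin2010}, or the general bound of Kornhauser, Miller and Spirakis~\cite{KornhauserMillerSpirakis1984}, that every solvable instance of the pebble motion problem admits a plan of length $O(n^3)$ on $n$ vertices. If $k$ exceeds this bound, the question reduces to feasibility, which is decidable in polynomial time. If the question asks for length exactly $k$, I would handle large $k$ by noting that plan lengths can be padded by $2$ through moving a pebble back and forth, which requires a parity check. Otherwise $k$ is polynomial and the plan serves as the certificate.
\item \textbf{Unary $k$.} Observe that in our reduction $k=n+4$, so hardness holds even when $k$ is encoded in unary. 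Membership in NP is then immediate, since the certificate has length $k$ and each step is checked in $O(1)$ time given an occupancy array.
\end{itemize}

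I would take the second option as the clean statement, since it avoids importing the cubic bound, and note the first as the route for binary $k$.
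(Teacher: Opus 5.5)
Your proposal is correct. It takes the only natural route, a polynomial-size certificate plus a polynomial-time verifier, and it fills in what the paper omits: the paper's entire justification is the phrase ``because they are each NP,'' with no certificate or verifier given. For Knight's Tour your certificate, an ordering of the squares, is the standard one and is fine as stated.

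For Knight Exchange you identify a point the paper never addresses. The optimality version takes $k$ as input, and with $k$ in binary a plan of length $k$ is not automatically a polynomial-size certificate. Your two fixes are both sound and do different jobs:
\begin{itemize}
\item The unary route is legitimate because the reduction only ever outputs $k=n+4$. It establishes NP-completeness for the unary-encoded problem.
\item The polynomial-bound route is what the corollary needs under the usual binary encoding. Apply the $O(n^3)$ bound of Kornhauser, Miller and Spirakis to a consistently labeled version of the colored instance.
\end{itemize}

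Your ``parity check'' for the exact-length reading can be made concrete. The knight's graph is bipartite, and each move shifts one hole to the opposite side of the bipartition. The hole set of $-\mathcal{C}$ equals that of $\mathcal{C}$, so every swapping plan has even length. Hence odd $k$ never has a solution. For even $k$ beyond the cubic bound, a short plan, which is nonempty whenever any pebble is present, can be padded with back-and-forth move pairs. With this, both the ``exactly $k$'' and ``at most $k$'' readings lie in NP even for binary $k$. That is the form in which the corollary's NP-completeness claim is actually justified.
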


\clearpage
\section*{Appendix}
    
\begin{figure}[h]
\begin{tikzpicture}[scale=1.2,
  vertex/.style={circle, draw, minimum size=12pt, inner sep=0pt},
  pebwhite/.style={circle, draw, fill=white, draw, inner sep=2pt},
  pebblack/.style={circle, draw, fill=black, inner sep=2pt},
  every label/.style={font=\small, inner sep=2pt} % <-- moves labels slightly away
]

% ====== VERTICES ======
\node[vertex,label=right:{$0$}] (0) at (0,0) {};

% left diamond
\node[vertex,label=above:{$4$}] (4) at (-1.2,1) {};
\node[vertex,label=right:{$5$}] (5) at (-2.2,0) {};
\node[vertex,label=below:{$6$}] (6) at (-1.2,-1) {};

% right diamond
\node[vertex,label=above:{$1$}] (1) at (1.2,1) {};
\node[vertex,label=right:{$2$}] (2) at (2.2,0) {};
\node[vertex,label=below:{$3$}] (3) at (1.2,-1) {};

% far right vertex
\node[vertex,label=right:{$8$}] (8) at (3.2,0) {};

% dangling x,y
\node[vertex,label=above:{$y$}] (y) at (0,1.6) {};
\node[vertex,label=below:{$x$}] (x) at (0,-1.6) {};

% ====== EDGES ======
\draw (4) -- (5) -- (6) -- (0) -- (4);
\draw (1) -- (2) -- (3) -- (0) -- (1);
\draw (1) -- (8) -- (3);
\draw (0) -- (y);
\draw (0) -- (x);

% ====== MINI PEBBLES ======
% white pebbles first
\foreach \v in {0,4,5,6,1,2,3,8} \node[pebwhite] at (\v) {};
% black pebbles on selected vertices
\foreach \v in {4,6,1,3} \node[pebblack] at (\v) {};

\end{tikzpicture}
\caption{One can check that it is possible to swap the black and white pebbles on $G'$, but there is no Hamiltonian Cycle in $G$. Note that $\text{traj}(x)$ and $\text{traj}(y)$ are longer than 2 for any plan that swaps the colors, so the length of the plan must be longer than 12 moves.}
\label{finite_plan_counterexample}
\end{figure}
    % Macro: first arg = comma list of white-pebble vertices,
%        second arg = comma list of black-pebble vertices.
%        third arg = move (text)
\newcommand{\PebbleState}[3]{%
  \begin{tikzpicture}[scale=2, every node/.style={font=\small}, baseline={(0.base)}]
    % vertex positions (defined once)
    \node (0) at (0.5,1)  [circle, draw, minimum size=10pt] {};
    \node (1) at (1,0.5)  [circle, draw, minimum size=10pt] {};
    \node (2) at (0.5,0)  [circle, draw, minimum size=10pt] {};
    \node (3) at (0,0.5)  [circle, draw, minimum size=10pt] {};
    \node (x) at (0,1.5)  [circle, draw, minimum size=10pt] {};
    \node (y) at (1,1.5)  [circle, draw, minimum size=10pt] {};

    % --- label above ---
    \node at (0.5,1.9) {\scriptsize #3};

    % edges
    \foreach \a/\b in {0/1,1/2,2/3,3/0,0/x,0/y} \draw (\a) -- (\b);

    % white pebbles (draw first so black can override)
    \foreach \v in {#1} \node at (\v) [circle, fill=white, draw, inner sep=2pt] {};

    % black pebbles (draw after whites to override)
    \foreach \v in {#2} \node at (\v) [circle, fill=black, draw, inner sep=2pt] {};

    % labels (adjust offset or size if desired)
    \foreach \v/\lab in {0/0,1/1,2/2,3/3,x/x,y/y}
      \node at ($(\v)+(0,0.18)$) {\scriptsize $\lab$};
  \end{tikzpicture}%
}

\begin{figure}[h!]
\begin{tabular}{l l}
\textbf{Move \#} & \textbf{Picture} \\[0.2cm] \hline \\

\textbf{0 through 1:} &
\PebbleState{0,2}{1,3}{Initial} \hspace{1cm}
\PebbleState{x,2}{1,3}{$0 \rightarrow x$}
\\ [2cm] \\

\textbf{2 through n:} &
\PebbleState{x,2}{0,3}{$1 \rightarrow 0$}
\hspace{1cm}
\PebbleState{x,1}{0,3}{$2 \rightarrow 1$}
\hspace{1cm}
\PebbleState{x,1}{0,2}{$3 \rightarrow 2$}
\\[1.5cm]\\

\textbf{n+1 through n+4:} &
\PebbleState{x,1}{y,2}{$0 \rightarrow y$}
\hspace{1cm}
\PebbleState{1,0}{y,2}{$x \rightarrow 0$}
\hspace{1cm}
\PebbleState{1,3}{y,2}{$0 \rightarrow 3$}
\hspace{1cm}
\PebbleState{1,3}{0,2}{$y \rightarrow 0$}
\end{tabular}

\caption{$G$ is a four-cycle, and $G'$ contains two additional vertices $x$ and $y$ attached to $0$ as shown. In this example, $n=4$.}
\label{Example_on_C4}
\end{figure}

    \begin{table}[h!]
\centering
\scriptsize
\renewcommand{\arraystretch}{1.2}

% ====================================================
% LEFT HALF (first 8 bitstrings)
% ====================================================
\begin{minipage}{0.495\textwidth}
\centering
\resizebox{\textwidth}{!}{%
\begin{tabular}{c|c}
\textbf{Edges} & \textbf{Chessboard} \\
\hline

\texttt{(0,0,0,0)} &
$\begin{pmatrix}
0&0&0&0&0&0&0&0&0\\
0&0&0&0&0&0&0&0&0\\
0&0&0&0&0&0&0&0&0\\
0&0&0&0&0&0&0&0&0\\
0&0&0&0&0&0&0&0&0\\
0&0&0&0&0&0&0&0&0\\
0&0&0&0&0&0&0&0&0\\
0&0&0&0&0&0&0&0&0\\
0&0&0&0&0&0&0&0&0
\end{pmatrix}$ \\ \hline

\texttt{(0,0,0,1)} &
$\begin{pmatrix}
0&0&0&0&0&0&0&0&0\\
1&1&1&1&1&1&0&0&0\\
0&1&1&1&1&1&1&1&0\\
0&1&1&1&1&1&1&1&0\\
0&1&1&1&1&1&1&1&0\\
0&1&1&1&1&1&1&1&0\\
0&1&1&1&1&1&1&0&0\\
0&0&1&1&1&1&1&0&0\\
0&0&0&0&0&0&0&0&0
\end{pmatrix}$ \\ \hline

\texttt{(0,0,1,0)} &
$\begin{pmatrix}
0&0&0&0&0&0&0&0&0\\
0&0&1&1&1&1&0&0&0\\
0&1&1&1&1&1&1&1&0\\
0&1&1&1&1&1&1&1&0\\
0&1&1&1&1&1&1&1&0\\
0&1&1&1&1&1&1&1&0\\
0&1&1&1&1&1&1&0&0\\
0&1&1&1&1&1&1&0&0\\
0&1&0&0&0&0&0&0&0
\end{pmatrix}$ \\ \hline

\texttt{(0,0,1,1)} &
$\begin{pmatrix}
0&0&0&0&0&0&0&0&0\\
1&1&1&1&1&1&0&0&0\\
0&1&1&1&1&1&1&1&0\\
0&1&1&1&1&1&1&1&0\\
0&1&1&1&1&1&1&1&0\\
0&1&1&1&1&1&1&1&0\\
0&1&1&1&1&1&1&0&0\\
0&1&1&1&1&1&1&0&0\\
0&1&0&0&0&0&0&0&0
\end{pmatrix}$ \\ \hline

\end{tabular}}
\end{minipage}
\hfill
% ====================================================
% RIGHT HALF (last 8 bitstrings)
% ====================================================
\begin{minipage}{0.495\textwidth}
\centering
\resizebox{\textwidth}{!}{%
\begin{tabular}{c|c}
\textbf{Edges} & \textbf{Chessboard} \\
\hline

\texttt{(1,0,0,0)} &
$\begin{pmatrix}
0&0&0&0&0&0&0&1&0\\
0&0&1&1&1&1&0&1&0\\
0&1&1&1&1&1&1&1&0\\
0&1&1&1&1&1&1&1&0\\
0&1&1&1&1&1&1&1&0\\
0&1&1&1&1&1&1&1&0\\
0&1&1&1&1&1&1&0&0\\
0&0&1&1&1&1&1&0&0\\
0&0&0&0&0&0&0&0&0
\end{pmatrix}$ \\ \hline

\texttt{(1,0,0,1)} &
$\begin{pmatrix}
0&0&0&0&0&0&0&1&0\\
1&1&1&1&1&1&0&1&0\\
0&1&1&1&1&1&1&1&0\\
0&1&1&1&1&1&1&1&0\\
0&1&1&1&1&1&1&1&0\\
0&1&1&1&1&1&1&1&0\\
0&1&1&1&1&1&1&0&0\\
0&0&1&1&1&1&1&0&0\\
0&0&0&0&0&0&0&0&0
\end{pmatrix}$ \\ \hline

\texttt{(1,0,1,0)} &
$\begin{pmatrix}
0&0&0&0&0&0&0&1&0\\
0&0&1&1&1&1&0&1&0\\
0&1&1&1&1&1&1&1&0\\
0&1&1&1&1&1&1&1&0\\
0&1&1&1&1&1&1&1&0\\
0&1&1&1&1&1&1&1&0\\
0&1&1&1&1&1&1&0&0\\
0&1&1&1&1&1&1&0&0\\
0&1&0&0&0&0&0&0&0
\end{pmatrix}$ \\ \hline

\texttt{(1,0,1,1)} &
$\begin{pmatrix}
0&0&0&0&0&0&0&1&0\\
1&1&1&1&1&1&0&1&0\\
0&1&1&1&1&1&1&1&0\\
0&1&1&1&1&1&1&1&0\\
0&1&1&1&1&1&1&1&0\\
0&1&1&1&1&1&1&1&0\\
0&1&1&1&1&1&1&0&0\\
0&1&1&1&1&1&1&0&0\\
0&1&0&0&0&0&0&0&0
\end{pmatrix}$ \\ \hline

\end{tabular}}
\end{minipage}

\caption{The mapping between edge strings and chessboards are shown above where the edge strings are in the order (up, right, down, left). For the edges, a 1 indicates the presence of an edge in that direction, and 0 indicates the absence of an edge. For the chessboards, a 1 indicates the presence of a square, and a 0 indicates the absence of a square. The (1,1,1,1) chessboard has four transition squares. To construct the other chessboards whenever an edge isn't present we delete the square sticking out and one adjacent square next to it. That way, the number of dark and light squares are conserved, and there is no longer a transition square for that edge. No Hamiltonian Cycle or Path will include an isolated vertex, so (0,0,0,0) is mapped to the empty chessboard. The mappings are continued in the next table.}
\label{tab:bit_strings_to_boards1}
\end{table}

    \begin{table}[h!]
\centering
\scriptsize
\renewcommand{\arraystretch}{1.2}

% ====================================================
% LEFT HALF (first 8 bitstrings)
% ====================================================
\begin{minipage}{0.495\textwidth}
\centering
\resizebox{\textwidth}{!}{%
\begin{tabular}{c|c}
\textbf{Edges} & \textbf{Chessboard} \\
\hline

\texttt{(0,1,0,0)} &
$\begin{pmatrix}
0&0&0&0&0&0&0&0&0\\
0&0&1&1&1&1&0&0&0\\
0&1&1&1&1&1&1&1&0\\
0&1&1&1&1&1&1&1&0\\
0&1&1&1&1&1&1&1&0\\
0&1&1&1&1&1&1&1&0\\
0&1&1&1&1&1&1&0&0\\
0&0&1&1&1&1&1&1&1\\
0&0&0&0&0&0&0&0&0
\end{pmatrix}$ \\ \hline

\texttt{(0,1,0,1)} &
$\begin{pmatrix}
0&0&0&0&0&0&0&0&0\\
1&1&1&1&1&1&0&0&0\\
0&1&1&1&1&1&1&1&0\\
0&1&1&1&1&1&1&1&0\\
0&1&1&1&1&1&1&1&0\\
0&1&1&1&1&1&1&1&0\\
0&1&1&1&1&1&1&0&0\\
0&0&1&1&1&1&1&1&1\\
0&0&0&0&0&0&0&0&0
\end{pmatrix}$ \\ \hline

\texttt{(0,1,1,0)} &
$\begin{pmatrix}
0&0&0&0&0&0&0&0&0\\
0&0&1&1&1&1&0&0&0\\
0&1&1&1&1&1&1&1&0\\
0&1&1&1&1&1&1&1&0\\
0&1&1&1&1&1&1&1&0\\
0&1&1&1&1&1&1&1&0\\
0&1&1&1&1&1&1&0&0\\
0&1&1&1&1&1&1&1&1\\
0&1&0&0&0&0&0&0&0
\end{pmatrix}$ \\ \hline

\texttt{(0,1,1,1)} &
$\begin{pmatrix}
0&0&0&0&0&0&0&0&0\\
1&1&1&1&1&1&0&0&0\\
0&1&1&1&1&1&1&1&0\\
0&1&1&1&1&1&1&1&0\\
0&1&1&1&1&1&1&1&0\\
0&1&1&1&1&1&1&1&0\\
0&1&1&1&1&1&1&0&0\\
0&1&1&1&1&1&1&1&1\\
0&1&0&0&0&0&0&0&0
\end{pmatrix}$ \\ \hline

\end{tabular}}
\end{minipage}
\hfill
% ====================================================
% RIGHT HALF (last 8 bitstrings)
% ====================================================
\begin{minipage}{0.495\textwidth}
\centering
\resizebox{\textwidth}{!}{%
\begin{tabular}{c|c}
\textbf{Edges} & \textbf{Chessboard} \\
\hline

\texttt{(1,1,0,0)} &
$\begin{pmatrix}
0&0&0&0&0&0&0&1&0\\
0&0&1&1&1&1&0&1&0\\
0&1&1&1&1&1&1&1&0\\
0&1&1&1&1&1&1&1&0\\
0&1&1&1&1&1&1&1&0\\
0&1&1&1&1&1&1&1&0\\
0&1&1&1&1&1&1&0&0\\
0&0&1&1&1&1&1&1&1\\
0&0&0&0&0&0&0&0&0
\end{pmatrix}$ \\ \hline

\texttt{(1,1,0,1)} &
$\begin{pmatrix}
0&0&0&0&0&0&0&1&0\\
1&1&1&1&1&1&0&1&0\\
0&1&1&1&1&1&1&1&0\\
0&1&1&1&1&1&1&1&0\\
0&1&1&1&1&1&1&1&0\\
0&1&1&1&1&1&1&1&0\\
0&1&1&1&1&1&1&0&0\\
0&0&1&1&1&1&1&1&1\\
0&0&0&0&0&0&0&0&0
\end{pmatrix}$ \\ \hline

\texttt{(1,1,1,0)} &
$\begin{pmatrix}
0&0&0&0&0&0&0&1&0\\
0&0&1&1&1&1&0&1&0\\
0&1&1&1&1&1&1&1&0\\
0&1&1&1&1&1&1&1&0\\
0&1&1&1&1&1&1&1&0\\
0&1&1&1&1&1&1&1&0\\
0&1&1&1&1&1&1&0&0\\
0&1&1&1&1&1&1&1&1\\
0&1&0&0&0&0&0&0&0
\end{pmatrix}$ \\ \hline

\texttt{(1,1,1,1)} &
$\begin{pmatrix}
0&0&0&0&0&0&0&1&0\\
1&1&1&1&1&1&0&1&0\\
0&1&1&1&1&1&1&1&0\\
0&1&1&1&1&1&1&1&0\\
0&1&1&1&1&1&1&1&0\\
0&1&1&1&1&1&1&1&0\\
0&1&1&1&1&1&1&0&0\\
0&1&1&1&1&1&1&1&1\\
0&1&0&0&0&0&0&0&0
\end{pmatrix}$ \\ \hline

\end{tabular}}
\end{minipage}

\caption{The rest of the mappings between edge strings and boards are shown above.}
\label{tab:bit_strings_to_boards2}
\end{table}

    \begin{figure}
\centering
\begin{tikzpicture}[scale=1.8, yscale=-1, every node/.style={circle, fill=black, inner sep=1.5pt}]

  % --- Vertices ---
  \node (node_0_0) at (0,0) {};
  \node (node_0_1) at (0,1) {};
  \node (node_0_2) at (0,2) {};
  \node (node_0_3) at (0,3) {};
  \node (node_0_5) at (0,5) {};
  \node (node_1_1) at (1,1) {};
  \node (node_1_2) at (1,2) {};
  \node (node_1_3) at (1,3) {};
  \node (node_1_4) at (1,4) {};
  \node (node_1_5) at (1,5) {};
  \node (node_2_0) at (2,0) {};
  \node (node_2_3) at (2,3) {};
  \node (node_2_4) at (2,4) {};
  \node (node_2_5) at (2,5) {};
  \node (node_3_0) at (3,0) {};
  \node (node_3_1) at (3,1) {};
  \node (node_3_2) at (3,2) {};
  \node (node_3_3) at (3,3) {};
  \node (node_3_4) at (3,4) {};
  \node (node_3_5) at (3,5) {};
  \node (node_4_0) at (4,0) {};
  \node (node_4_1) at (4,1) {};
  \node (node_4_2) at (4,2) {};
  \node (node_4_3) at (4,3) {};
  \node (node_4_4) at (4,4) {};
  \node (node_4_5) at (4,5) {};
  \node (node_5_0) at (5,0) {};
  \node (node_5_1) at (5,1) {};
  \node (node_5_2) at (5,2) {};
  \node (node_5_5) at (5,5) {};

  % --- Edges ---
  \draw (node_0_0) -- (node_0_1);
  \draw (node_0_1) -- (node_0_2);
  \draw (node_0_1) -- (node_1_1);
  \draw (node_0_2) -- (node_0_3);
  \draw (node_0_2) -- (node_1_2);
  \draw (node_0_3) -- (node_1_3);
  \draw (node_0_5) -- (node_1_5);
  \draw (node_1_1) -- (node_1_2);
  \draw (node_1_3) -- (node_1_4);
  \draw (node_1_4) -- (node_1_5);
  \draw (node_1_5) -- (node_2_5);
  \draw (node_2_0) -- (node_3_0);
  \draw (node_2_3) -- (node_2_4);
  \draw (node_2_3) -- (node_3_3);
  \draw (node_2_4) -- (node_2_5);
  \draw (node_2_4) -- (node_3_4);
  \draw (node_2_5) -- (node_3_5);
  \draw (node_3_0) -- (node_3_1);
  \draw (node_3_1) -- (node_3_2);
  \draw (node_3_1) -- (node_4_1);
  \draw (node_3_2) -- (node_3_3);
  \draw (node_3_2) -- (node_4_2);
  \draw (node_3_3) -- (node_3_4);
  \draw (node_3_3) -- (node_4_3);
  \draw (node_3_4) -- (node_3_5);
  \draw (node_3_4) -- (node_4_4);
  \draw (node_3_5) -- (node_4_5);
  \draw (node_4_0) -- (node_4_1);
  \draw (node_4_0) -- (node_5_0);
  \draw (node_4_1) -- (node_4_2);
  \draw (node_4_1) -- (node_5_1);
  \draw (node_4_2) -- (node_4_3);
  \draw (node_4_2) -- (node_5_2);
  \draw (node_4_3) -- (node_4_4);
  \draw (node_4_4) -- (node_4_5);
  \draw (node_4_5) -- (node_5_5);
  \draw (node_5_1) -- (node_5_2);

\end{tikzpicture}
\includegraphics[scale=0.52]{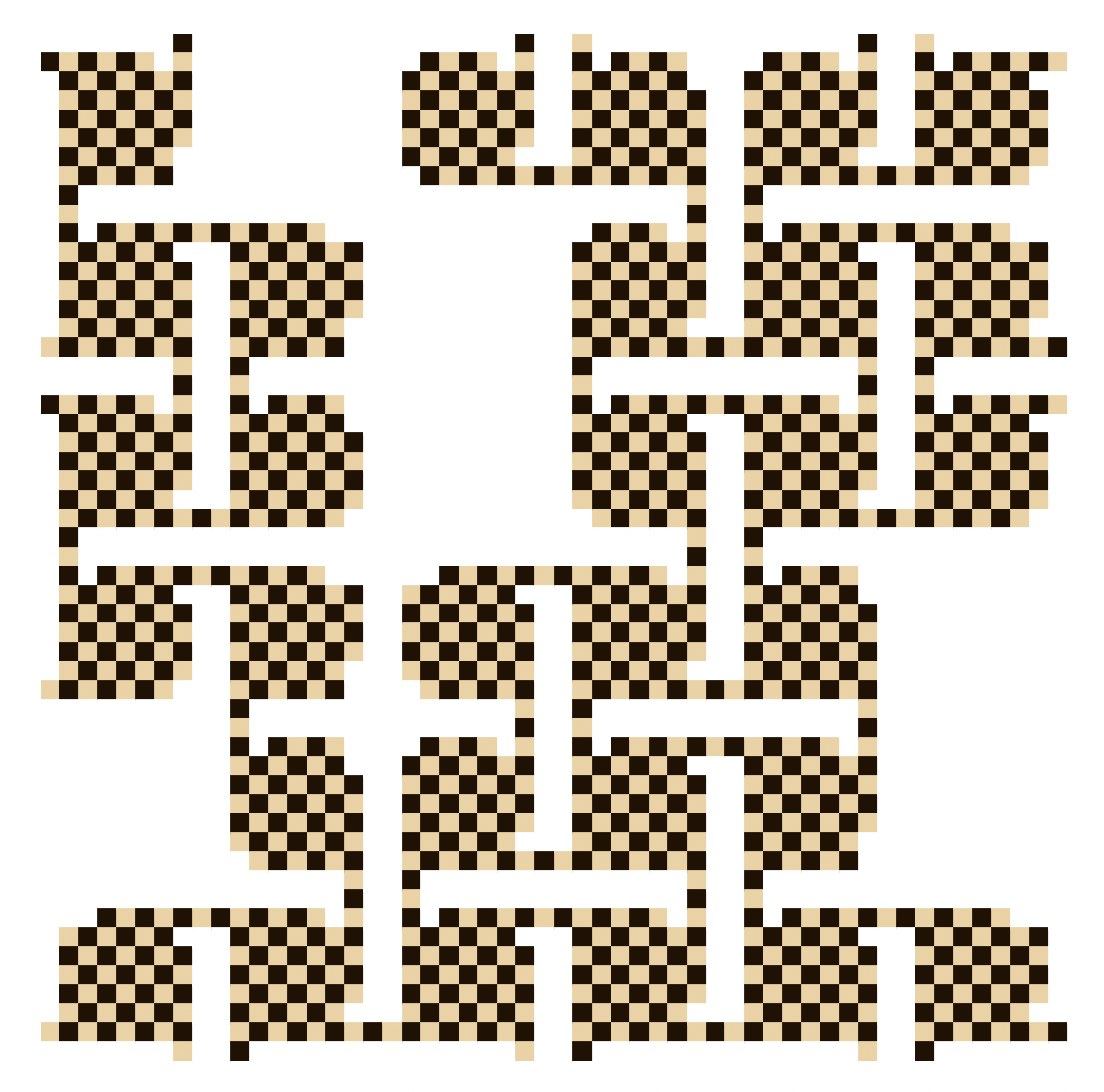}
\caption{A grid graph and the resulting construction is shown. Each vertex in the grid graph gets mapped to a 9 by 9 chessboard with holes. There is only one way to get between any two adjacent chessboards that are connected by an edge in the grid graph, and there is no way to get between them if they are not connected. Also notice that all the transition squares are the same color, and the number of squares of this color is the majority by 1 for each chessboard.}
\label{reduction_example}
\end{figure}
    \begin{figure}[h] \centering \includegraphics[scale=0.23]{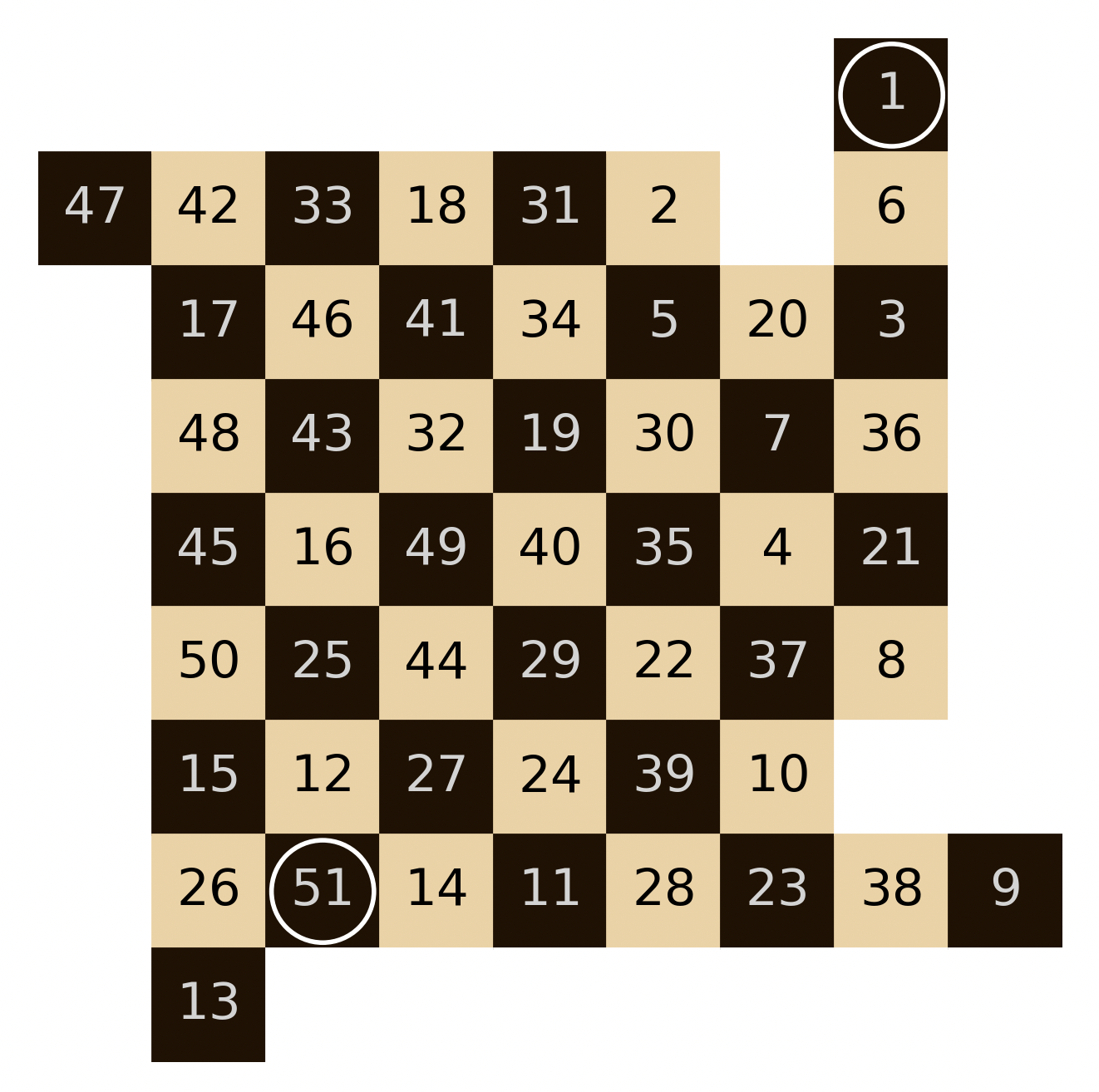} \includegraphics[scale=0.23]{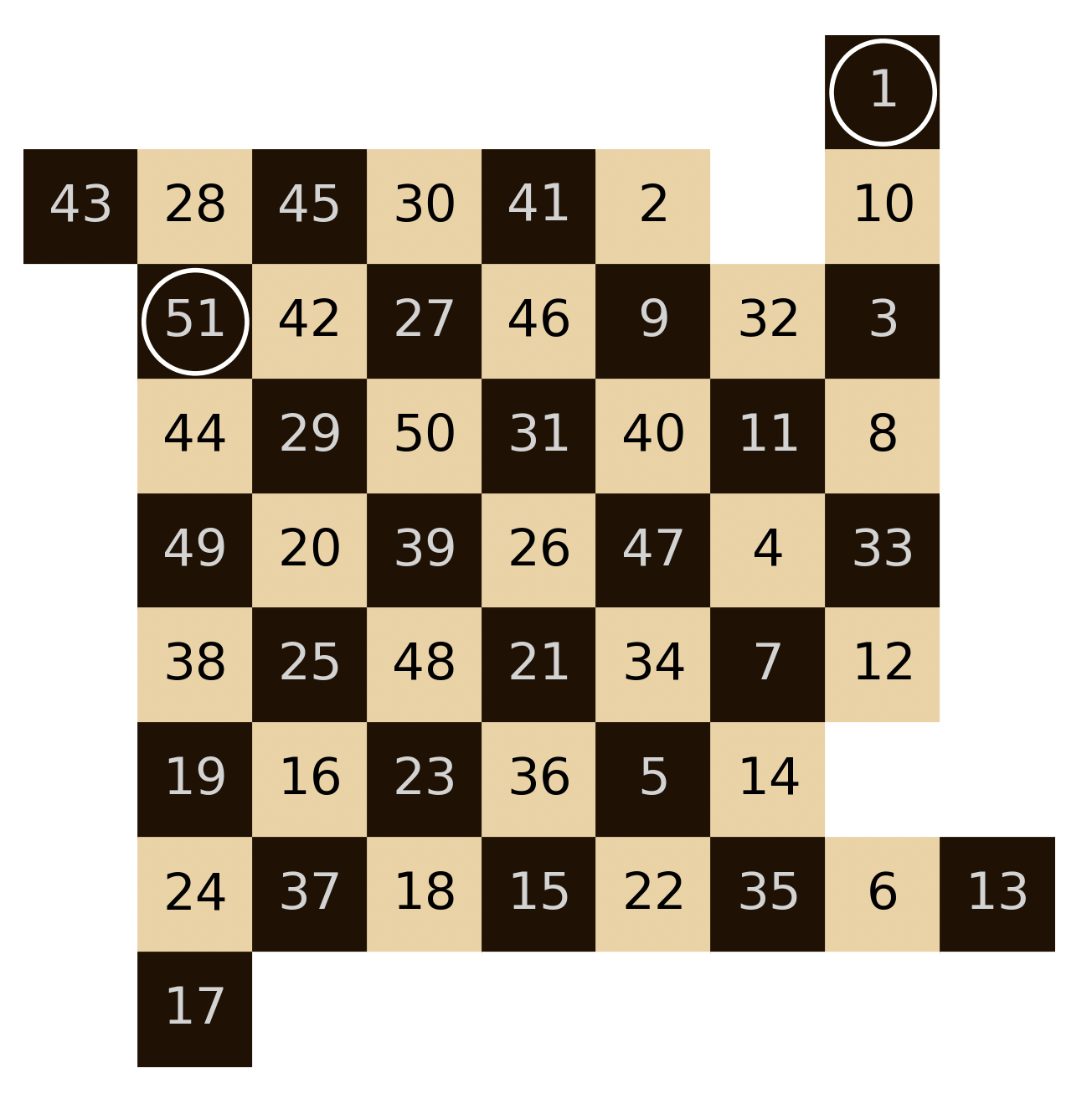} \includegraphics[scale=0.23]{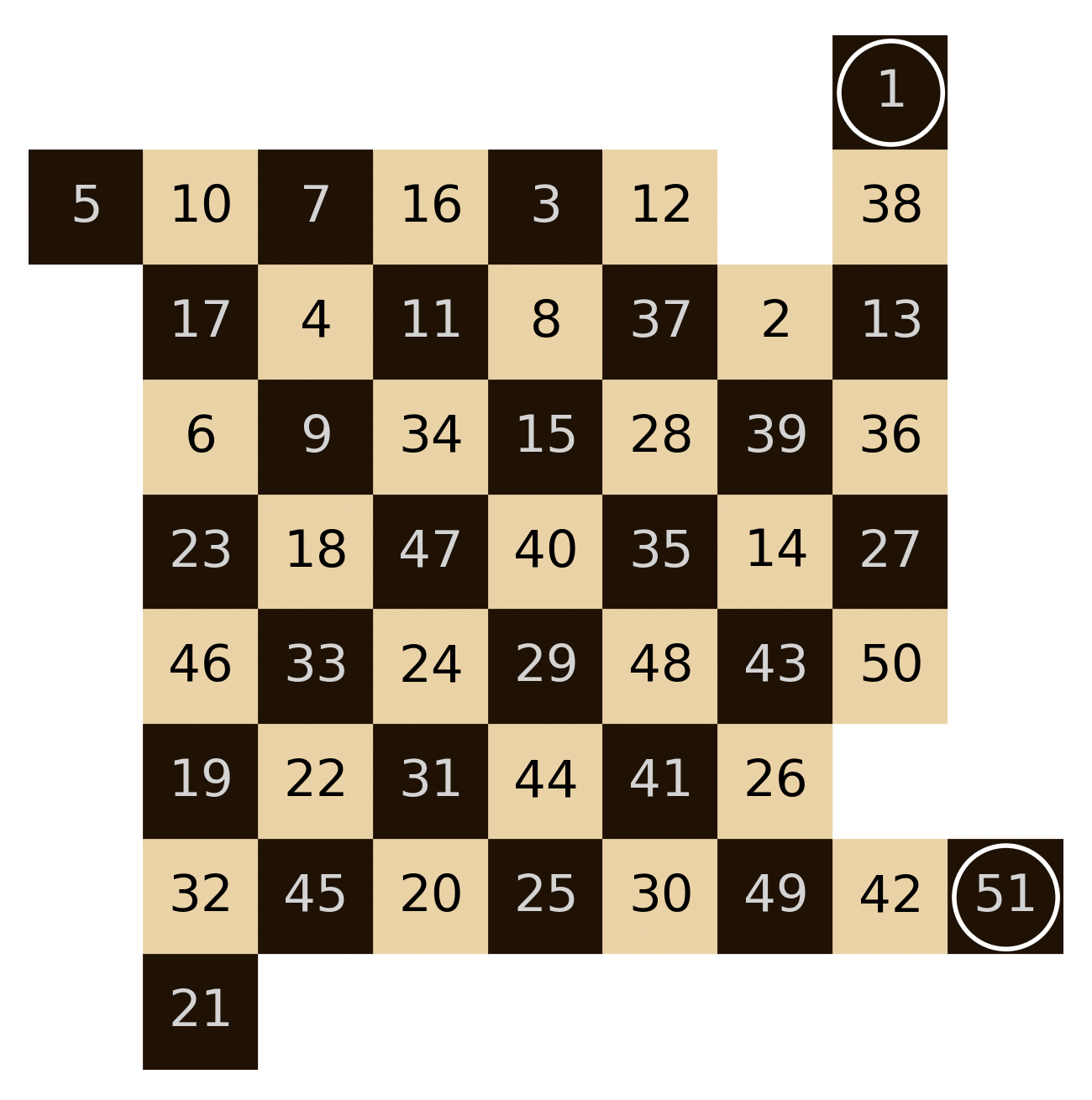} \caption{A board is $k-$connected if its vertex in the grid graph has $k$ edges. Transition squares are entry/leaving points to other chessboards. For any pair of transition squares on the unique 4-connected board up to symmetry, there is an open knight's tour. The visit times on each square is enumerated, and transition squares are circled.} \label{4_connected_cases} \includegraphics[scale=0.23]{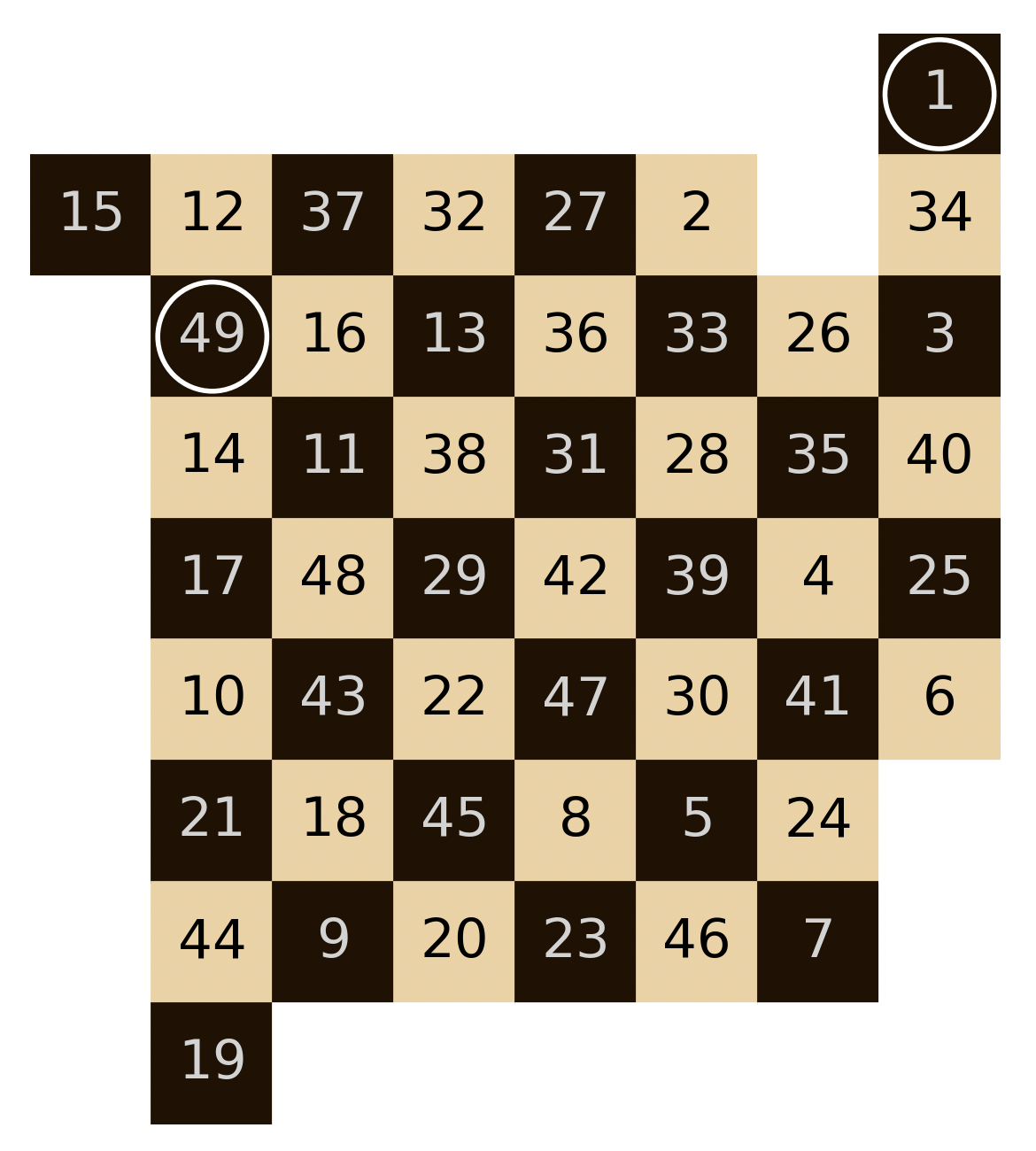} \includegraphics[scale=0.23]{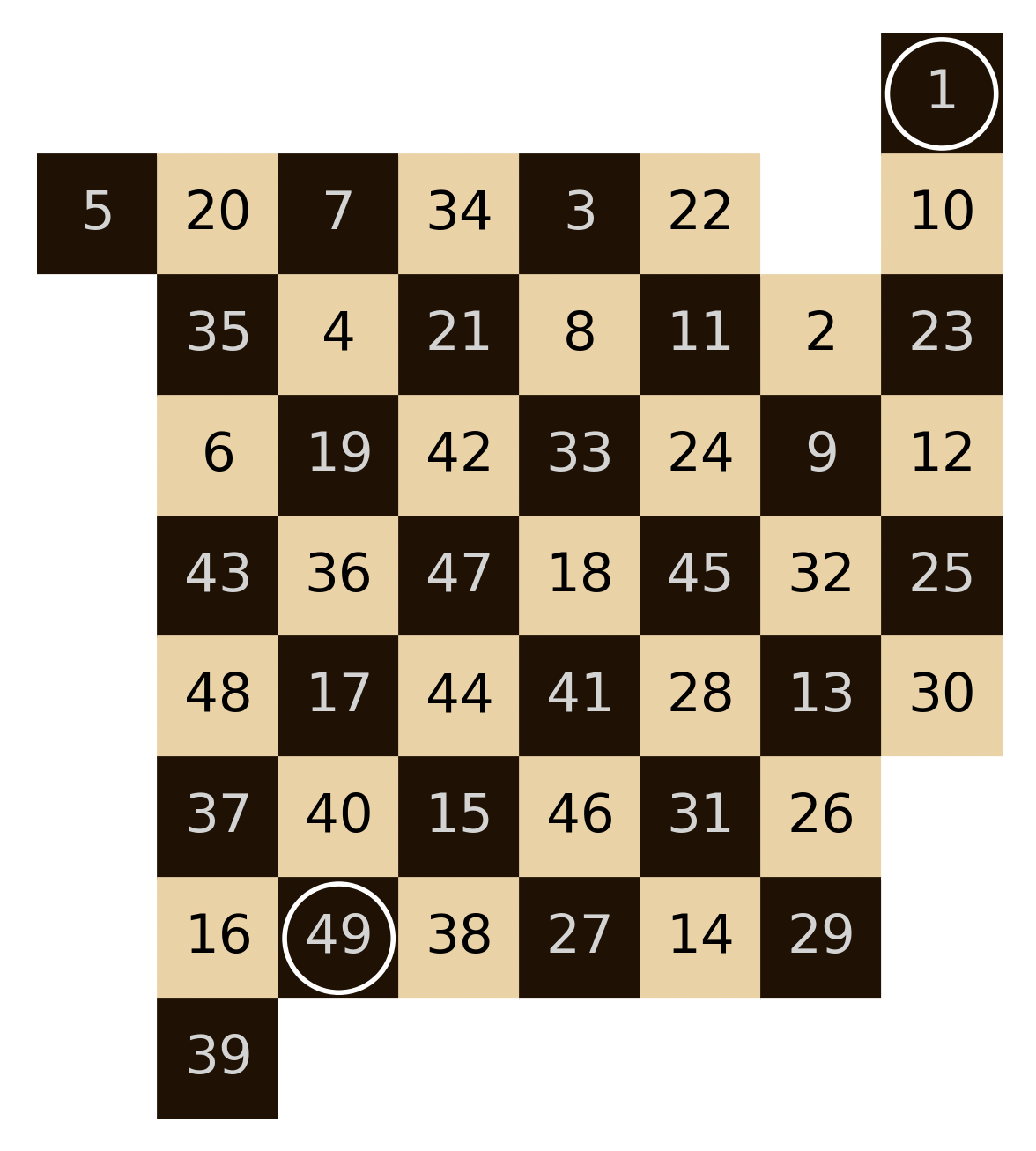} \includegraphics[scale=0.23]{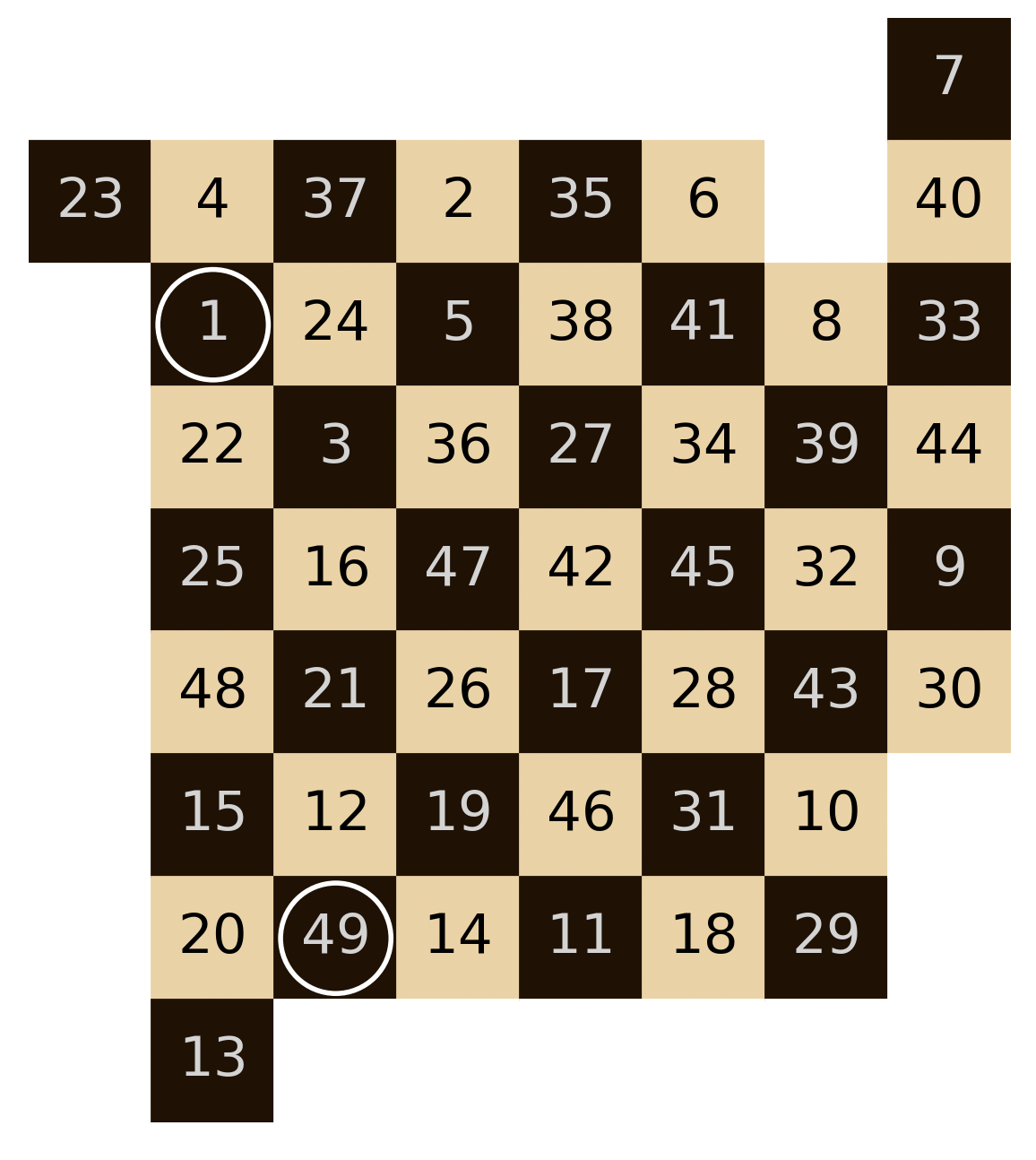} \includegraphics[scale=0.23]{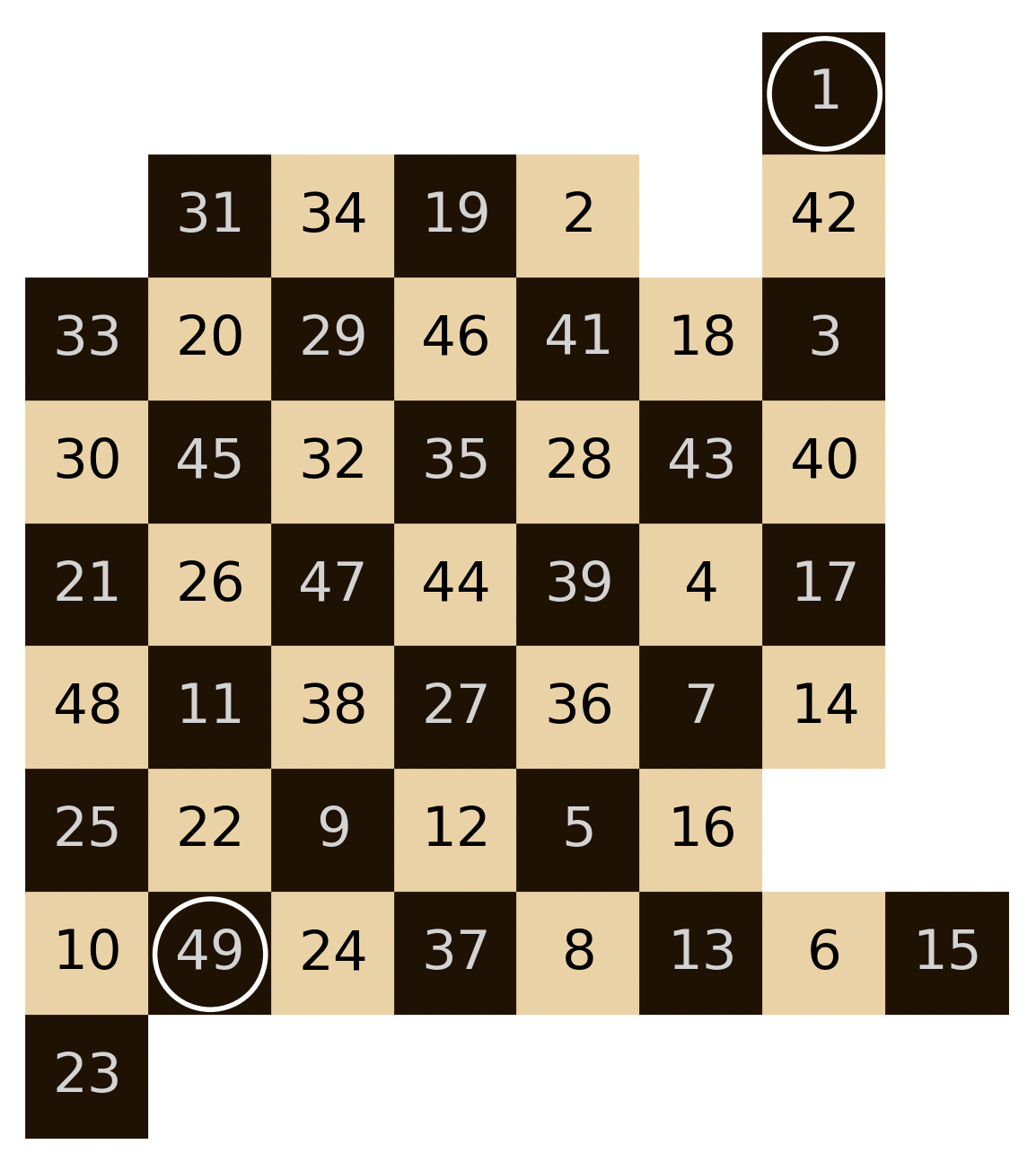} \includegraphics[scale=0.23]{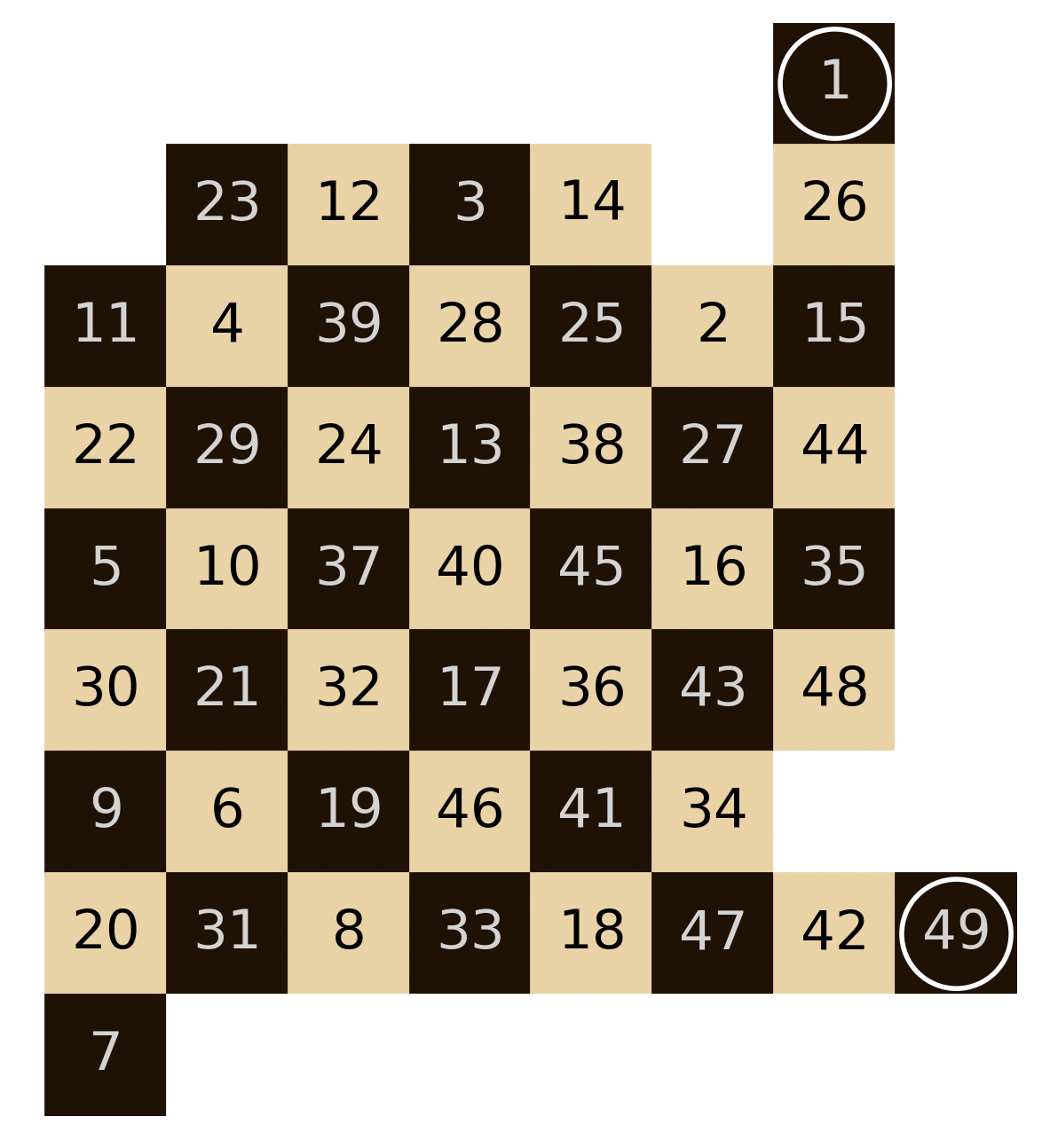} \includegraphics[scale=0.23]{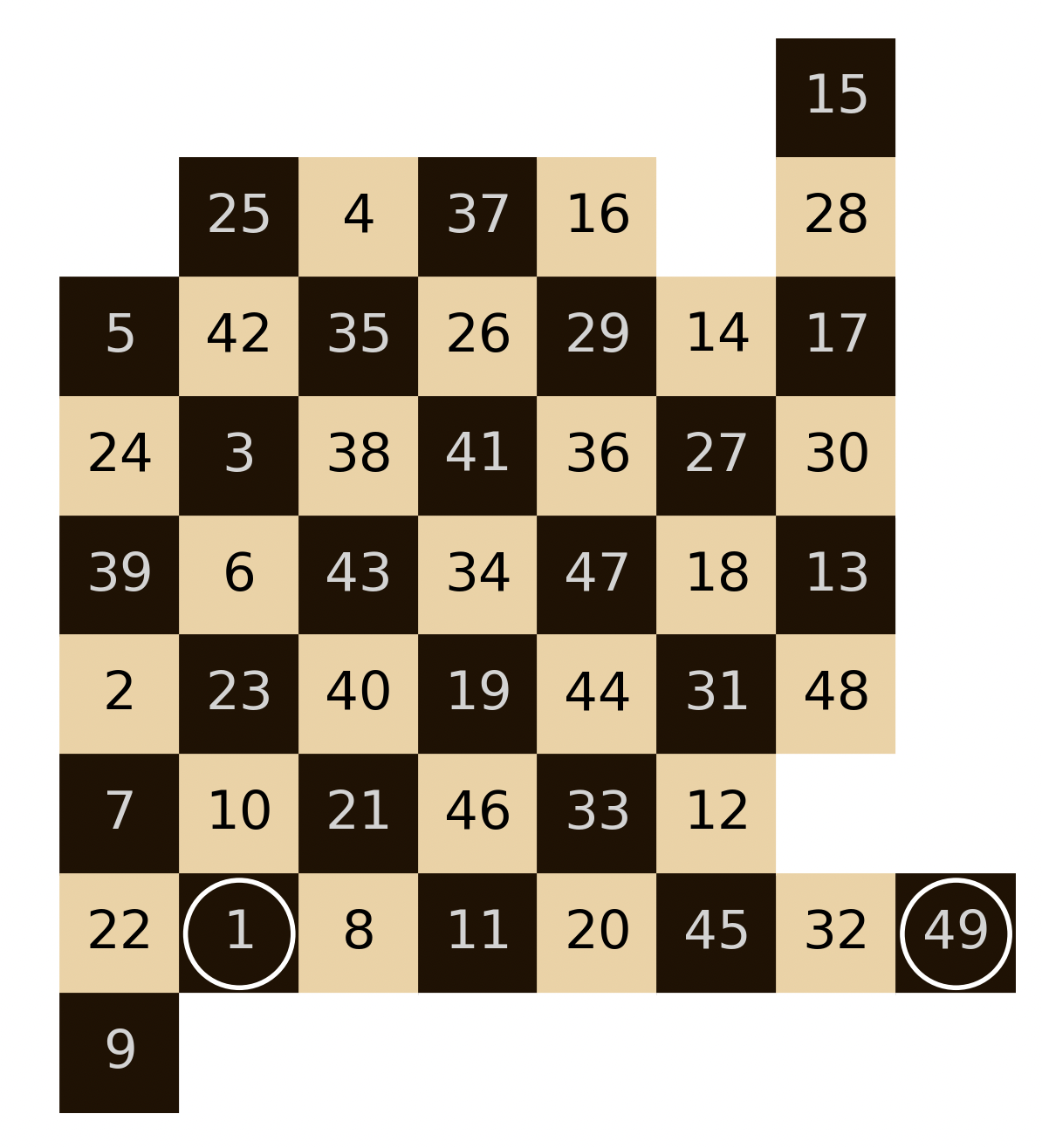} \caption{For every pair of transition squares on two 3-connected board up to symmetry, there is a knight's tour.} \label{3_connected_cases} \includegraphics[scale=0.23]{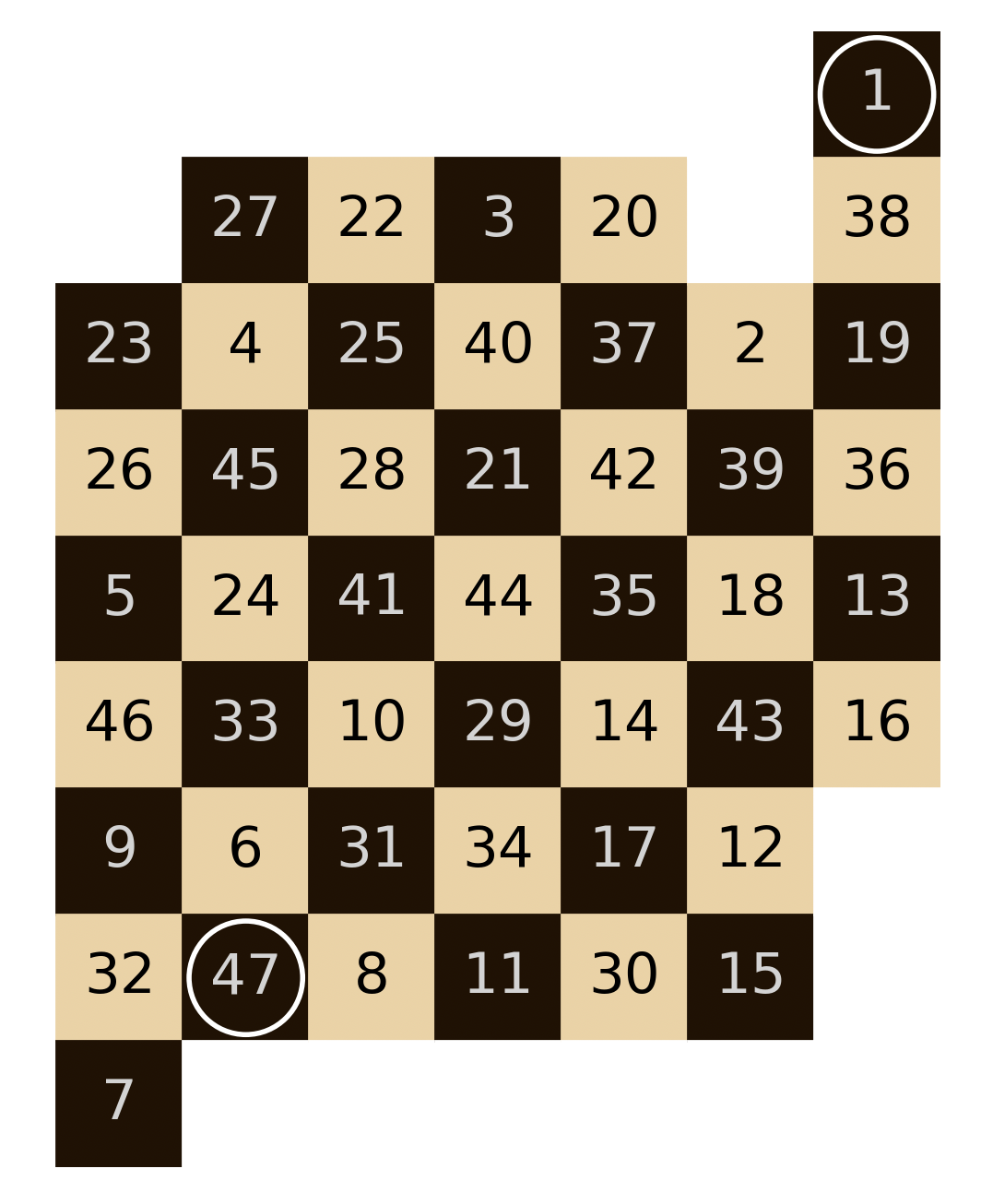} \includegraphics[scale=0.23]{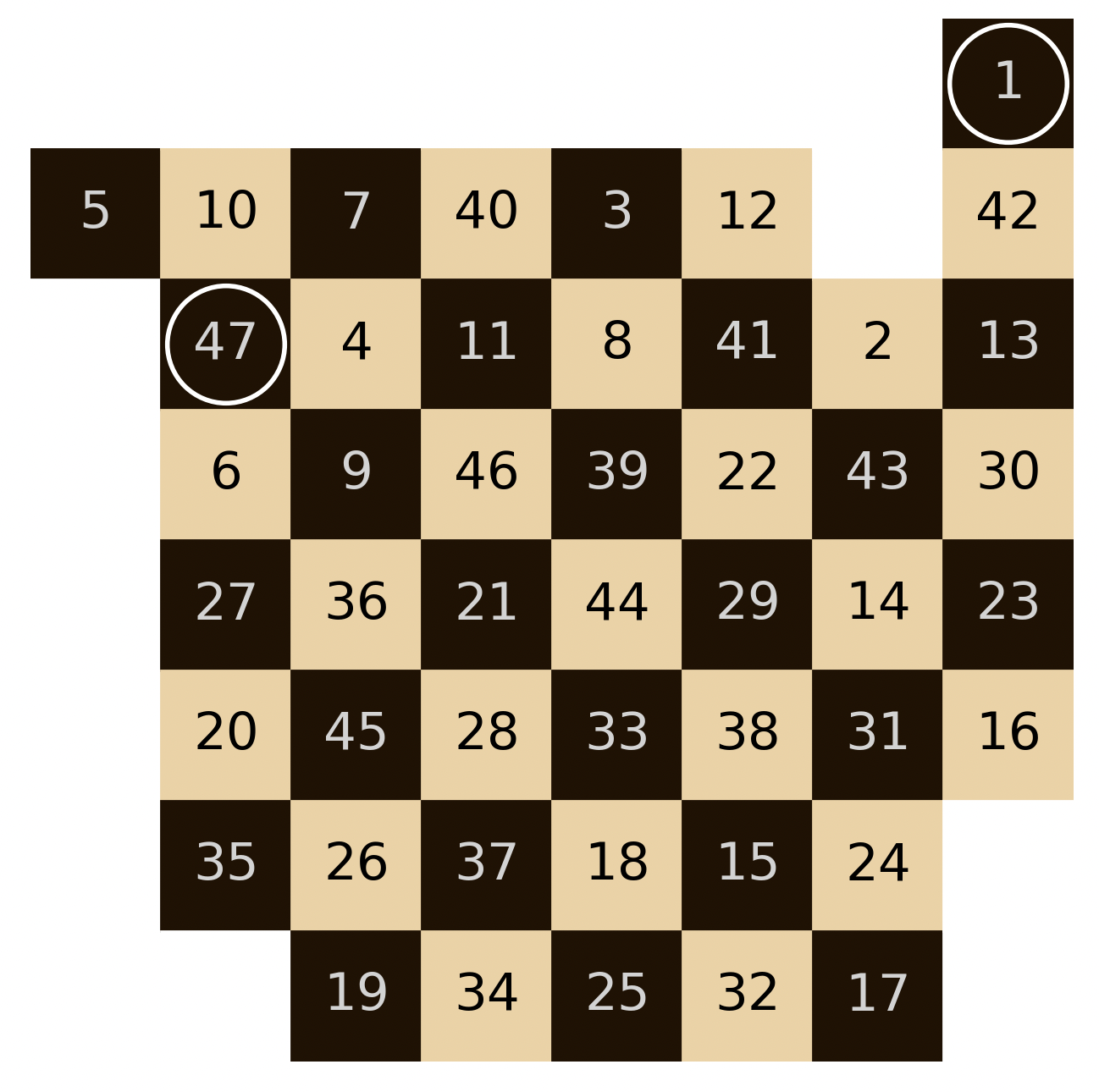} \includegraphics[scale=0.23]{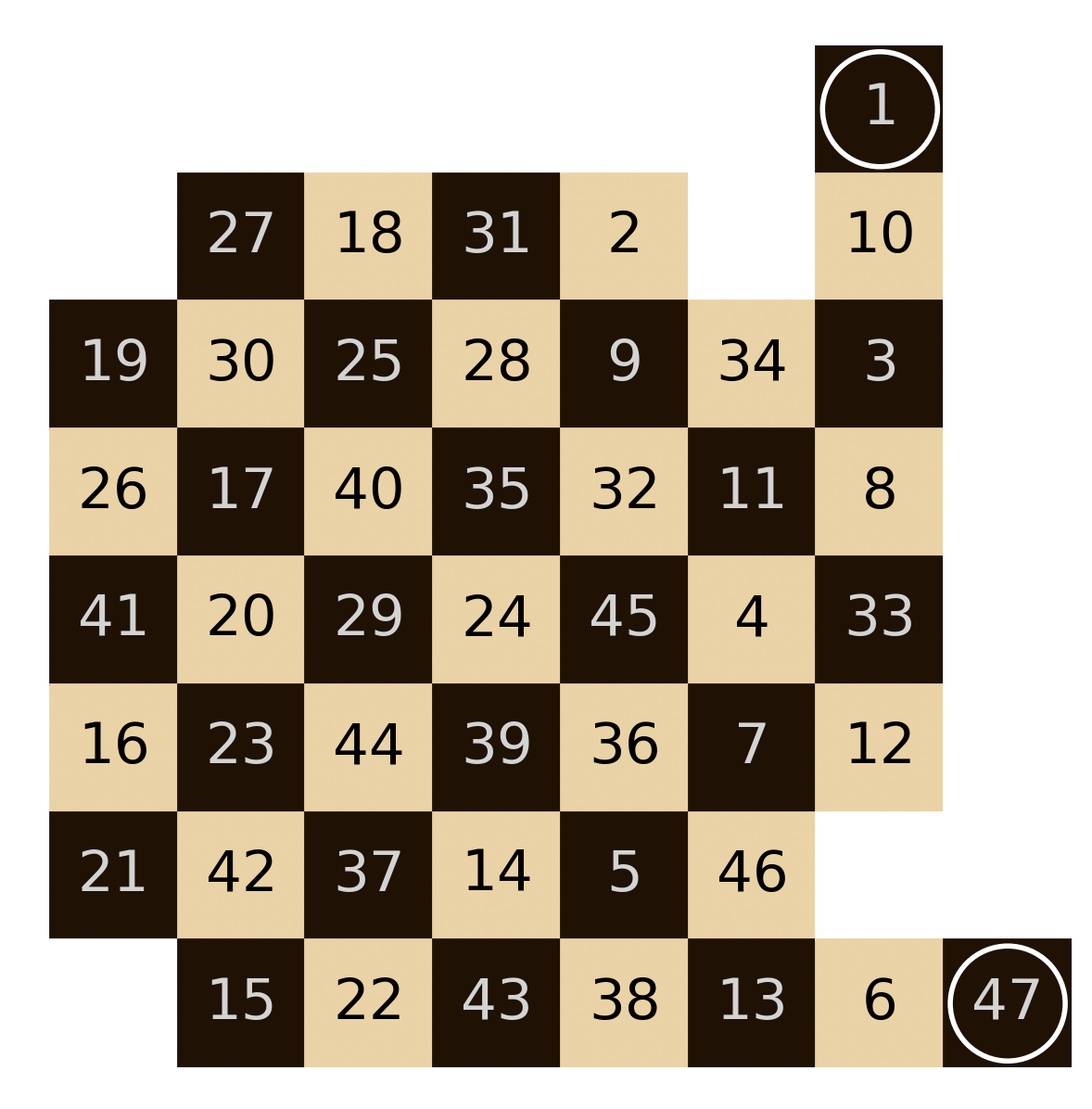} \caption{For every pair of transition squares on three 2-connected board up to symmetry, there is a knight's tour.} \label{2_connected_cases} \end{figure} 
\begin{figure} \centering \includegraphics[scale=0.23]{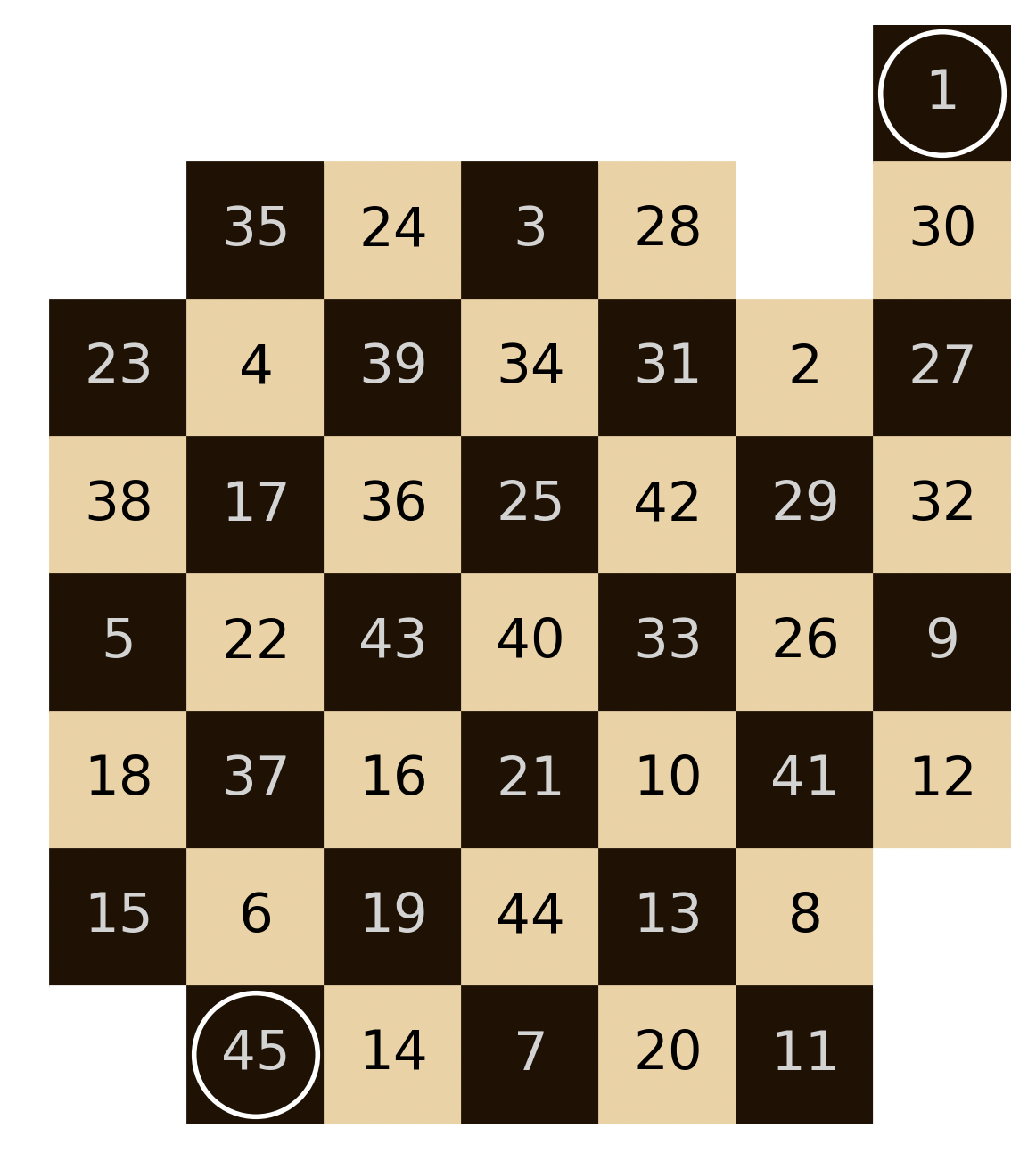} \includegraphics[scale=0.23]{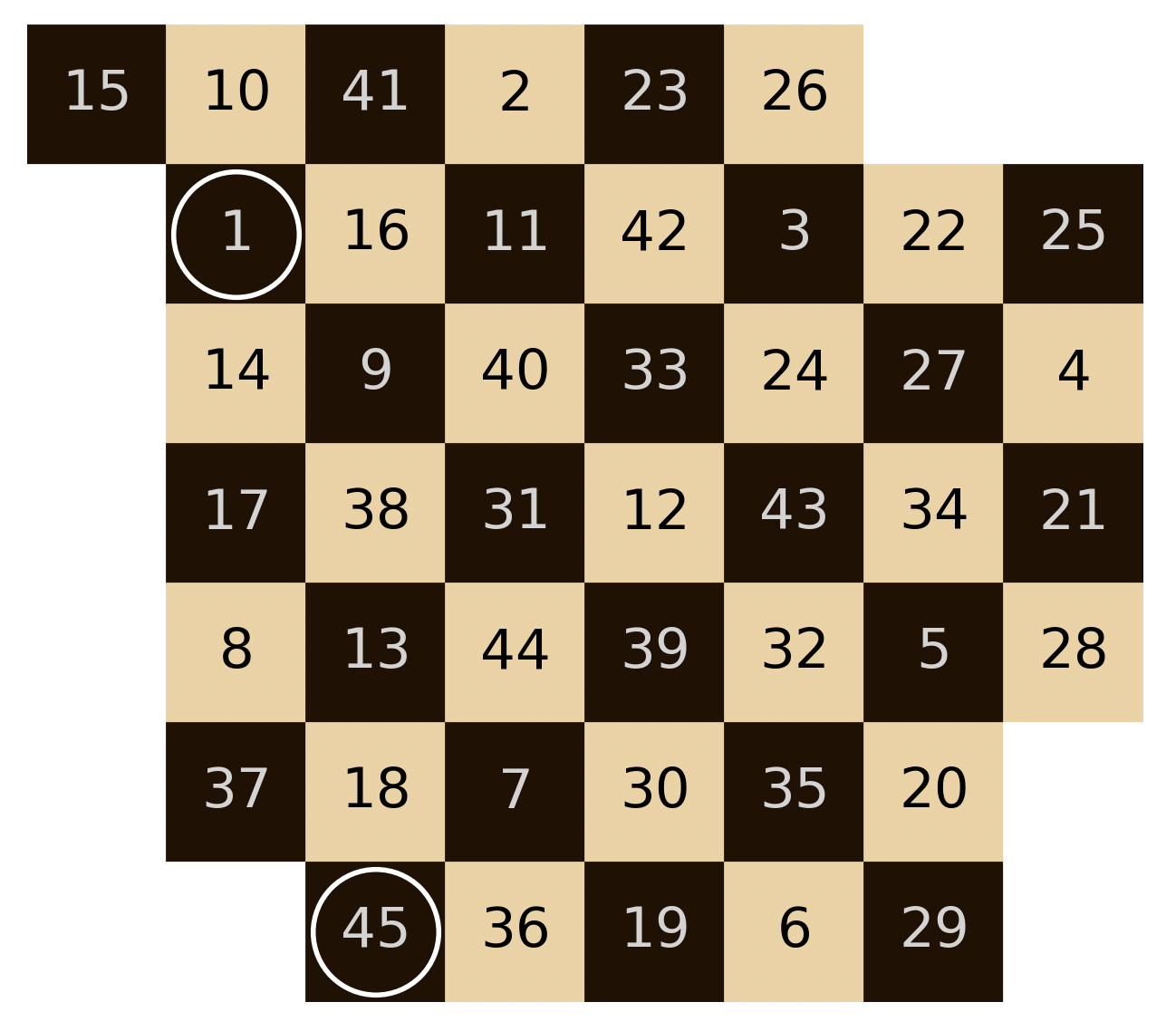} \caption{For both types of 1-connected boards, up to symmetry, there is a knight's tour starting on the transition square. For these boards, we only care about the starting square being a transition square because any knight tour that enters has to stay here.} 
\label{1-connected_cases} 

\end{figure}
    \begin{figure}[h]
    \centering
     \includegraphics[scale=0.3]{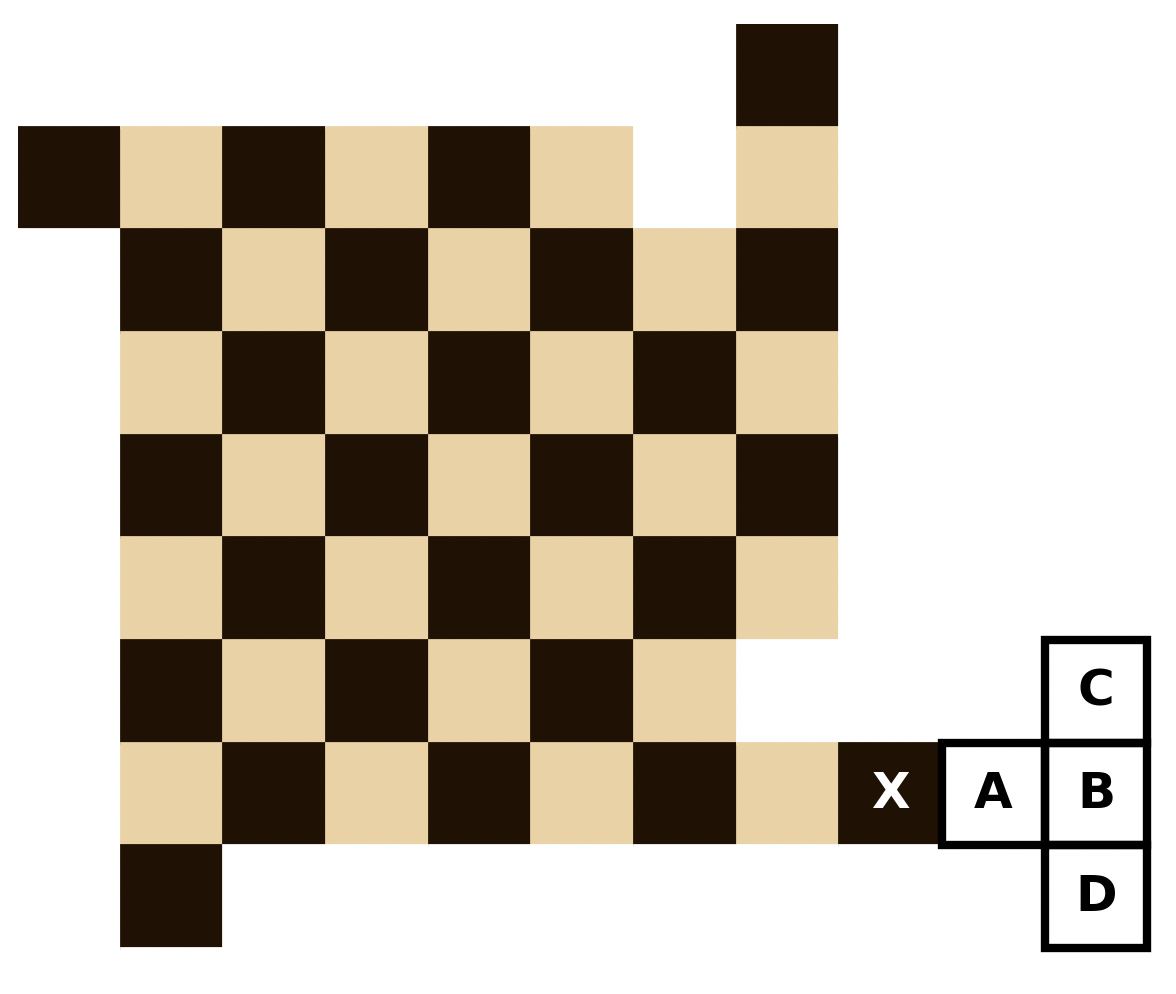}
    \caption{In our reduction to closed and open knight's tour, the chessboards coming from vertices on the right edge contain arm X, or a flipped version of it. We select a chessboard on the right edge and add squares A, B, C, and D. In this way, X is the only square that is adjacent to C and D in the knight's graph, while A and B are isolated vertices, so we do not place pebbles on A and B. The new chessboard is still connected. While $v_0$ was an arbitrary vertex in the pebble swap reduction, X is not, but identifying it can be done in poly-time.}
    \label{fig:reduction_is_extendable}
    \includegraphics[scale=0.25]{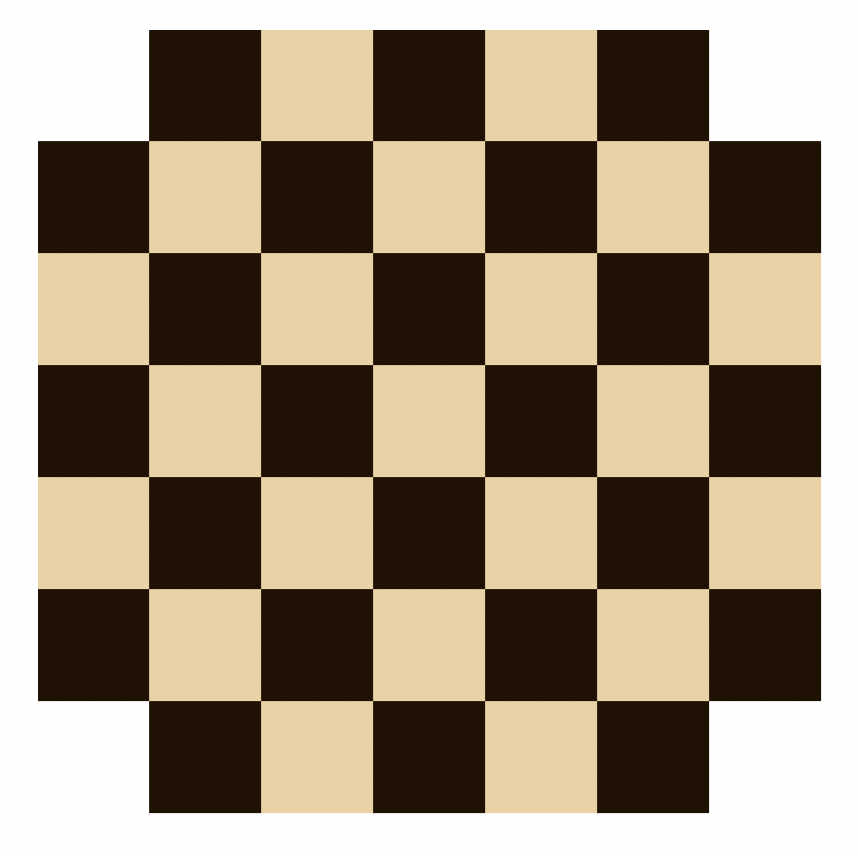}
    \caption{Shown above is an example of a chessboard that cannot be nicely extended. No matter how two additional squares are added, there are at least two squares in the original chessboard that are a knight's distance away from one of them. We don't build chessboards like this in our reduction, so this is not a problem. }
    \label{fig:not_extendable_example}
\end{figure}

\clearpage

\end{document}